  \theoremstyle{plain}
  \newtheorem{thm}{Theorem}[section]
  \newtheorem{lem}[thm]{Lemma}
  \theoremstyle{definition}
\begin{document}
\newcommand{\bm}{\bibitem}

\title{{\ New Periodic Solutions for Some Planar \\$N+3$-Body Problems with Newtonian Potentials \thanks{Supported
partially by NSF of China.}}}
\author{ {\normalsize   Pengfei Yuan and Shiqing Zhang}\\
{\normalsize pfyuan123@gmail.com,\quad  zhangshiqing@msn.com}\\
{\small Department of Mathematics, Sichuan University, Chengdu
610064, China}}
\date{}
\maketitle
\begin{abstract}For some planar Newtonian $N+3$-body problems, we use variational minimization methods to prove the existence of new  periodic solutions satisfying that $N$ bodies chase each other on a curve, and the other $3$  bodies chase each other on another curve. From the definition of the group action in equations $(3.1)-(3.3)$, we can find that they are new solutions which are also different from  all the examples of Ferrario and Terracini (2004)$[22]$.\\[5pt]
\noindent{\it \bf{Key Words:}} $N+3$-body problems, periodic solutions, winding numbers,  variational minimizers.
\\[4pt]
\bf{2000 Mathematicals Subject Classification: 34C15, 34C25, 58F}
\end{abstract}
%%\tableofcontents
%\parskip 1pt
%
\section{Introduction and  Main Results}

 In recent years, many authors used methods of minimizing the Lagrangian action  on a symmetric space to study the periodic solutions for Newtonian $N$-body problem $([2],[4]-[6],[8]-[29],[31]-[40])$. Especially, A.Chenciner-R.Montgomery $[16]$ proved the existence of the remarkable figure eight type periodic solution for Newtonian three-body problem with equal masses, C.Sim\'{o} $[32]$  discovered many new periodic solutions for Newtonian $N$-body problem using numerical methods. C.Machal $[27]$ studied the fixed-ends (Bolza) problem for Newtonian $N$-body problem and proved that the minimizer for the Lagrangian action has no interior collision; A.Chenciner $[12]$, D.Ferario and S.Terracini $[22]$ simplified and developed C.Marchal's important works; S.Q.Zhang $[36]$, S.Q.Zhang, Q.Zhou $([37]-[40])$ decomposed the Lagrangian action  for $N$-body problem into some sum for two-body problem and compared the lower bound for the lagrangian action on  test orbits  with the upper bound on collision set to avoid collisions under some cases.  Motivated  by the works of A.Chenciner and R.Montgomery, C.Sim\'{o}, C.Marchal, S.Q.Zhang and Q.Zhou, K.C. Chen $([8]-[11])$ studied some planar $N$-body problems  and got some new planar non-collision periodic and quasi-periodic solutions.

 The equations for the motion of the  Newtonian $N$-body problem are:

\begin{align}
m_{i}\ddot{q}_i=\frac{\partial U(q)}{\partial q_i}, \quad  i=1,\ldots, N,   \tag{1.1}
\end{align}

where $q_i\in \mathbb{R}^k$  denotes the position of $m_i$,  and the potential function is :
$$
U=\sum_{1\leq i< j \leq N}^{}\dfrac{ m_i m_j}{|q_i-q_j|}.
$$

It is well known that  critical points of the action functional $f$:
$$
f(q)=\int_0^T(\frac{1}{2}\sum_{i=1}^{N}m_i|\dot{q}_i|^2+U(q))dt,\quad q\in E,\eqno{(1.2)}
$$

are $T$ periodic solutions of the $N$-body problem  $(1.1)$,

where
$$
E=\{ q=(q_1,q_2,\ldots, q_{N}) \,|\, q_i(t)\in W^{1,2}(\mathbb{R}/T\mathbb{Z},\mathbb{R}^k),\,\sum_{i=1}^N m_i q_i(t)=0,\,q_i(t)\neq q_j(t), \forall i\neq j, \forall t\in \mathbb{R} \} ,\eqno{(1.3)}
$$

$$
W^{1,2}(\mathbb{R}/T\mathbb{Z},\mathbb{R}^k)=\{x(t)\,|\,x(t)\in L^2(\mathbb{R},\mathbb{R}^k),\dot{x}(t)\in L^2(\mathbb{R},\mathbb{R}^k) ,x(t+T)=x(t) \}  .\eqno{(1.4)}
$$

{ \bf Definition 1.1}\quad {\it Let $\Gamma: x(t),\, t\in [a,\,b]$ be a given oriented  continuous closed curve,
and $p$ a point of the plane, not on the curve. Then the mapping $\varphi: \Gamma\rightarrow S^1$ given by
$$
\varphi(x(t))=\dfrac{x(t)-p}{|x(t)-p|}, \quad   t\in [a,b],                        \eqno{(1.5)}
$$
is defined to be the position mapping of the  curve $\Gamma$  relative to $p$. When the point on $\Gamma$ goes around the given oriented curve once, its image point $\varphi(x)$ will go around $S^1$ in the same direction with $\Gamma$ a number of times. When moving counter-clockwise or clockwise, we set the sign $+$ or $-$, and we denote it by $deg(\Gamma,\,\,p)$. If $p$ is the origin, we denote it by $deg(\Gamma)$.}

C.H.Deng and S.Q.Zhang $[20]$, X.Su and S.Q.Zhang $[33]$ studies  periodic solutions for a class of planar $N+2$-body problems, they defined the following orbit spaces:
\begin{align}
\Lambda_0=\{ & q\in E_0\,|\, q_i(t+\dfrac{T}{r})=O(\dfrac{2\pi}{r}) q_i(t), \quad i=1,\ldots,N+2;\notag\\                           &q_{i+1}(t)=q_i(t+\dfrac{T}{N}),\,\,i=1,\ldots,N-1,\,\, q_1(t)=q_N(t+\dfrac{T}{N});\notag\\
 & q_i(t+\dfrac{T}{N})=q_i(t), \,\,i=N+1,\,N+2, \forall\, t >0 \tag{1.6}\}
\end{align}
and
\begin{align}
\Lambda=\{&q\in \Lambda_0 \,|\, q_i(t)\neq q_j(t),\, \forall i\neq j,\forall t\in \mathbb{R} ;\notag\\
&deg(q_i(t)-q_j(t))=1,\, 1\leq i\neq j\leq N, deg(q_{N+1}(t)-q_{N+2}(t))=k_1            \tag{1.7}    \},
\end{align}
where
$$
E_0=\{ q=(q_1,q_2,\ldots, q_{N+2}) \,| \,q_i(t)\in W^{1,2}(\mathbb{R}/T\mathbb{Z},\mathbb{R}^2),\,\sum_{i=1}^{N+2} m_i q_i(t)=0  \} ,\eqno{(1.8)}
$$
\[
O(\theta)=\left(
\begin{array}{cc}
\cos{\theta}&-\sin{\theta}\\
\sin{\theta}&\cos{\theta}
\end{array}
\right).
\]
Motivated by their work, we consider $N+3$-body  problems($N>3$,\, $N$ and  $3$ are coprime), the equations of the motion are:
$$
m_i\ddot{q}_i(t)=\dfrac{\partial U(q)}{\partial q_i},\,\,\,i=1,\,\ldots,\,N+3. \eqno{(1.9)}
$$
We define the following orbit spaces :
\begin{align}
\Lambda_1=\{  q\in E_1\,|\,&q_i(t+\frac{T}{r})=O(\frac{2\pi d}{r})q_i(t),\quad i=1,\,\ldots,\,N+3;\notag \\
&q_{i+1}(t)=q_i(t+\frac{T}{N}),\,\,i=1,\, \ldots,\, N,\,\, q_1(t)=q_N(t+\frac{T}{N});  \notag \\
&q_{N+j}(t)=q_{N+j-1}(t+\frac{T}{3}),\,j=2,\,3,\,\, q_{N+1}(t)=q_{N+3}(t+\dfrac{T}{3}) ; \notag \\
&q_i(t+\frac{T}{3})=q_i(t),\,i=1,\ldots,N ; \notag  \\
 & q_j(t+\frac{T}{N})=q_j(t),\, j=N+1, N+2, N+3\}, \tag{1.10}
\end{align}
and
\begin{align}
\Lambda_2=\{q\in \Lambda_1|&q_i(t)\neq q_j(t),  \forall i\neq j, \forall t\in R; \notag\\
&deg(q_i(t)- q_j(t))=k_1, 1\leq i<j\leq N;  \notag\\
&deg (q_{i^\prime}(t)-q_{j^\prime}(t))=k_2, N+1\leq i^\prime<j^\prime\leq N+3\}\notag,
\end{align}
where
$$
E_1=\{ q=(q_1,q_2,\ldots, q_{N+3}) | q_i(t)\in W^{1,2}(\mathbb{R}/T\mathbb{Z},\mathbb{R}^2),\,\sum_{i=1}^{N+3} m_i q_i(t)=0\} .\eqno{(1.11)}
$$

Notice that $r, k_1,k_2,d $ satisfy the following compatible conditions:
$$
k_1=d(mod \,r)\, , k_2=d(mod\, r) , k_1=3s_1 ,k_2=Ns_2, s_1,s_2\in \mathbb{Z}.\eqno{(1.12)}
$$

Since $N$ and $3$ are coprime, we have $(N,3)=1$. In this paper,  we also require $r$  and $3$ coprime,  so $(r,3)=1$.

We get the following theorem:

{\bf Theorem 1.1}\hspace{0.1cm}{\it $(1)$ \,Consider the seven-body problems $(1.9)$ of equal masses, for $r=7,\,k_1=3,\,k_2=-4,\, d=3$, then the global minimizer of $f$ on $\bar{\Lambda}_2$ is a non-collision periodic solution of $(1.9)$.

$(2)$\hspace{0.1cm}Consider the eight-body problems $(1.9)$ of equal masses, for $r=8,\,k_1=3,\,k_2=-5,\,d=3$, then the global minimizer of $f$ on $\bar{\Lambda}_2$ is a non-collision periodic solution of $(1.9)$.

$(3)$\hspace{0.1cm}Consider the ten-body problems $(1.9)$ of equal masses, for $r=10,\,k_1=3,\,k_2=-7,\,d=3$, then the global minimizer of $f$ on $\bar{\Lambda}_2$ is a non-collision periodic solution of $(1.9)$.}

%
%%%%%%%%%%%%%%%%%%%%%%%%%%%%%%%%%%%%%%%%%%%%%%%%%%%%%%%%%%%%%%%%%%%%%%%%%%%%%%%%%%%%%%%%%%%%%%%%%%
%
\section{ Some Lemmas}
\begin{lem}
(Eberlein-Shmulyan$[7]$)\quad  A Banach space $X$ is reflexive if
and only if any bounded sequence in $X$ has a weakly convergent subsequence.
\end{lem}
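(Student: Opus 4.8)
This is the classical Eberlein--\v{S}mulian theorem, combined with Kakutani's characterization of reflexivity; since the paper only invokes it, I would present the two implications separately, as follows.

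\emph{Necessity} ($X$ reflexive $\Rightarrow$ bounded sequences have weakly convergent subsequences). Given a bounded sequence $(x_n)$, I would first pass to the closed subspace $Y=\overline{\operatorname{span}}\{x_n:n\ge 1\}$, which is separable. A closed subspace of a reflexive space is again reflexive (the canonical map $J_Y$ is the restriction of $J_X$ and remains onto, via Hahn--Banach extension of functionals on $Y$ to $X$). Since $Y$ is reflexive, $Y=(Y^*)^*$, so separability of $Y$ forces separability of $Y^*$ (using that a Banach space with separable dual is itself separable, applied to $Y^*$). Separability of $Y^*$ makes the weak-$*$ topology on $B_{Y^{**}}$ metrizable; by reflexivity $B_{Y^{**}}=J_Y(B_Y)$, and the weak-$*$ topology there corresponds to the weak topology on $B_Y$. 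By Banach--Alaoglu this ball is weakly compact, hence, being metrizable, weakly sequentially compact. After rescaling $(x_n)$ into $B_Y$ I extract a weakly convergent subsequence in $Y$; since every $f\in X^*$ restricts to an element of $Y^*$, the same subsequence converges weakly in $X$.

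\emph{Sufficiency} (bounded sequences have weakly convergent subsequences $\Rightarrow X$ reflexive). This is the substantive direction. I would show $J(B_X)=B_{X^{**}}$. Fix $\xi\in B_{X^{**}}$. By Goldstine's theorem $J(B_X)$ is weak-$*$ dense in $B_{X^{**}}$, so I can run the standard inductive construction: alternately choosing finite collections of norming functionals $f_k\in S_{X^*}$ and points $x_n\in B_X$ so that each $x_n$ matches $\xi$ to within $1/n$ on the functionals chosen so far, while the $f_k$ are taken to norm the finite-dimensional pieces of the span generated. By hypothesis the resulting sequence $(x_n)$ has a subsequence $x_{n_j}\rightharpoonup x\in X$. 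The functionals $\{f_k\}$ are arranged to norm the closed span $M$ of $\{x_{n_j}\}\cup\{x\}$, and the construction forces $\langle f_k,x\rangle=\langle \xi,f_k\rangle$ for every $k$; since $\{f_k\}$ norms $M$ and $x\in M$, this pins down $Jx=\xi$ on all of $X^*$. Hence $J$ is onto and $X$ is reflexive.

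The main obstacle is the sufficiency direction, specifically the bookkeeping in the inductive construction: the finite sets of functionals must be enlarged fast enough to eventually norm the span of the whole sequence together with its weak limit, yet the points $x_n$ must be chosen (via Goldstine) to track $\xi$ ever more accurately, and one must verify at the end that the weak limit $x$ produced by the hypothesis genuinely represents $\xi$ rather than merely agreeing with it on a proper subspace. Getting the interleaving of these two choices right --- so that the norming property and the approximation property hold simultaneously in the limit --- is the crux; the necessity direction, by contrast, is a routine reduction to the separable metrizable case.
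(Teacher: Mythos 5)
The paper offers no proof of this lemma at all: it is quoted as the classical Eberlein--\v{S}mulian characterization of reflexivity with a citation to a textbook, and is used only as a black box (together with Lemma 2.2) to extract a minimizer of $f$ on $\bar{\Lambda}_2$. So there is nothing in the paper to compare your argument against; what you have written is the standard textbook proof sketch, and the necessity direction (reduction to a separable closed subspace, separability of $Y^*$ from separability of $Y^{**}=Y$, metrizability of the weak-$*$ topology on $B_{Y^{**}}$, Banach--Alaoglu, Hahn--Banach to pass back to $X$) is correct and complete enough as a sketch.

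In the sufficiency direction, however, your description of the interleaved construction places the norming in the wrong space, and this is exactly the point you yourself flag as the crux. You say the functionals $f_k$ are arranged to norm the closed span $M$ of $\{x_{n_j}\}\cup\{x\}$ \emph{inside $X$}, and then conclude $J x=\xi$ because $x\in M$. That inference does not close: knowing that $\{f_k\}$ norms $J(M)\subset X^{**}$ and that $\langle \xi-Jx,f_k\rangle=0$ for all $k$ tells you nothing about $\xi-Jx$ unless you already know $\xi-Jx$ lies in a subspace normed by $\{f_k\}$, i.e.\ unless $\xi\in J(M)$ --- which is what is being proved. The correct bookkeeping (Whitley's argument) chooses the $f_k$ at stage $n$ to $\tfrac12$-norm the finite-dimensional subspace $\mathrm{span}\{\xi,\,\xi-Jx_1,\dots,\xi-Jx_n\}$ \emph{of $X^{**}$}; then $\xi-Jx$ lies in the norm-closed span of these elements (because $x$ is in the norm-closed convex span of the $x_{n_j}$), the countable family $\{f_k\}$ still $\tfrac12$-norms that closed span, and $\langle\xi-Jx,f_k\rangle=0$ for all $k$ forces $\|\xi-Jx\|=0$. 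With that correction your outline is the standard proof; as it stands, the final step is a genuine gap.
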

\begin{lem}
 $([7])$\quad  Let $X$ be a real reflexive Banach space, $M\subset X$ is a weakly closed subset, $f:M\rightarrow R$ is weakly semi-continuous.If $f$ is coercive, that is, $f(x)\rightarrow +\infty $ as $\parallel x\parallel \rightarrow +\infty$, then $f(x)$ attains its infimum on $M$.
\end{lem}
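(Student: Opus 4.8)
The plan is to apply the direct method of the calculus of variations, matching the two standing hypotheses (coercivity and weak semicontinuity) against the reflexivity of $X$ supplied by Lemma 2.1. Throughout I read ``weakly semi-continuous'' as weakly \emph{lower} semi-continuous, since it is lower semicontinuity that is adapted to capturing an infimum; upper semicontinuity would instead suit maximization and would not yield the stated conclusion.

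First I would set $c=\inf_{x\in M}f(x)$. Since $M$ is (tacitly) nonempty and $f$ is real-valued, we have $c<+\infty$. Choose a minimizing sequence $\{x_n\}\subset M$ with $f(x_n)\to c$ as $n\to\infty$. The first substantive step is to show that $\{x_n\}$ is bounded in $X$: were it not, some subsequence would satisfy $\|x_{n_k}\|\to+\infty$, and coercivity would force $f(x_{n_k})\to+\infty$, contradicting $f(x_n)\to c$ with $c$ finite from above. Hence $\{x_n\}$ is contained in a bounded subset of $X$.

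Next, because $X$ is reflexive, Lemma 2.1 (Eberlein--Shmulyan) guarantees that the bounded sequence $\{x_n\}$ has a weakly convergent subsequence, say $x_{n_k}\rightharpoonup x_0$ in $X$. Since $M$ is weakly closed, the weak limit obeys $x_0\in M$. I then invoke weak lower semicontinuity of $f$ along this subsequence to get
$$
f(x_0)\le \liminf_{k\to\infty} f(x_{n_k})=c.
$$
On the other hand, $x_0\in M$ gives $f(x_0)\ge \inf_M f=c$. Combining the two inequalities yields $f(x_0)=c$; in particular $c$ is finite (so the infimum is not $-\infty$) and is attained at $x_0\in M$, which is exactly the assertion.

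The argument is essentially routine once the hypotheses are lined up with the steps, so I do not expect a serious obstacle. The one place that genuinely needs care is the boundedness of the minimizing sequence: coercivity must be used there to rule out escape to infinity \emph{before} Eberlein--Shmulyan can be applied, and without it reflexivity alone would give nothing. The only other point worth flagging is the interpretation of the semicontinuity hypothesis as the lower one, on which the final inequality chain depends.
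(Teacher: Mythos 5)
Your proposal is correct: it is the standard direct method argument (minimizing sequence, boundedness via coercivity, weak compactness via Lemma 2.1, weak closedness of $M$, and weak lower semicontinuity to pass to the limit), and you rightly flag that ``weakly semi-continuous'' must be read as weakly \emph{lower} semi-continuous. The paper itself offers no proof of this lemma --- it is quoted from the reference $[7]$ as a known result --- so there is nothing to compare against beyond noting that your argument is exactly the one the cited source would supply.
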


\begin{lem}
$([30])$\quad Let $G$ be a group acting orthogonally on a Hilbert space $H$. Define the fixed point space $F_G=\{x\in H|g\cdot x=x ,\forall g \in G\}$, if $f\in C^1(H,R)$  and satisfies $f(g\cdot x)=f(x)$ for any $g\in G$ and $x\in H$, then the critical point of $f$ restricted on $F_G$ is also a critical point of $f$ on $H$.
\end{lem}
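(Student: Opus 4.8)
The plan is to recognize this as the Palais principle of symmetric criticality and prove it directly through the Riesz gradient. First I would observe that since $G$ acts orthogonally, each $g$ is a bounded linear operator on $H$, so the fixed point set $F_G=\{x\in H\mid g\cdot x=x,\ \forall g\in G\}$ is a closed linear subspace. Consequently the tangent space to $F_G$ at any of its points is $F_G$ itself. Hence the statement that $x_0\in F_G$ is a critical point of the restriction $f|_{F_G}$ means precisely that $f'(x_0)(v)=0$ for every $v\in F_G$.

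Next I would exploit the invariance $f(g\cdot x)=f(x)$. Differentiating this identity at $x=x_0$ in an arbitrary direction $w\in H$ and applying the chain rule gives $f'(g\cdot x_0)(g\cdot w)=f'(x_0)(w)$. Since $x_0\in F_G$ we have $g\cdot x_0=x_0$, so this collapses to $f'(x_0)(g\cdot w)=f'(x_0)(w)$ for all $w\in H$ and all $g\in G$; in other words, the derivative $f'(x_0)$ is itself $G$-invariant.

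The key step is then to transfer this invariance to the gradient. Let $\nabla f(x_0)\in H$ be the Riesz representative of $f'(x_0)$, so that $f'(x_0)(w)=\langle \nabla f(x_0),w\rangle$. Because the action is orthogonal, $\langle \nabla f(x_0),g\cdot w\rangle=\langle g^{-1}\cdot\nabla f(x_0),w\rangle$; combining this with the invariance from the previous paragraph forces $g^{-1}\cdot\nabla f(x_0)=\nabla f(x_0)$ for every $g\in G$, that is, $\nabla f(x_0)\in F_G$.

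Finally I would feed this back into the restricted criticality condition: since $\nabla f(x_0)\in F_G$, I may take $v=\nabla f(x_0)$ in the relation $f'(x_0)(v)=0$, which yields $0=f'(x_0)(\nabla f(x_0))=\|\nabla f(x_0)\|^2$. Therefore $\nabla f(x_0)=0$, and $x_0$ is a genuine critical point of $f$ on all of $H$. The two bookkeeping facts carrying the whole argument are the only delicate points: that $F_G$ is linear, so that its tangent space equals $F_G$ and the restricted critical condition may be tested against any fixed vector, and that the orthogonality of the action is exactly what is needed to push the $G$-invariance of the derivative onto the gradient. Everything else is routine.
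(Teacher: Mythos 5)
Your argument is correct, and it is the standard proof of the Hilbert-space case of Palais's principle of symmetric criticality. The paper itself offers no proof of this lemma --- it is quoted verbatim from the literature (Palais, \emph{Comm.\ Math.\ Phys.} 69 (1979)) --- so there is nothing internal to compare against; judged on its own, your chain of steps is complete: $F_G$ is a closed linear subspace (fixed-point set of continuous linear maps), so restricted criticality at $x_0$ is exactly $f'(x_0)(v)=0$ for all $v\in F_G$; differentiating the invariance $f(g\cdot x)=f(x)$ at $x_0\in F_G$ gives $f'(x_0)(g\cdot w)=f'(x_0)(w)$; orthogonality supplies $g^{*}=g^{-1}$, which converts this into $g^{-1}\cdot\nabla f(x_0)=\nabla f(x_0)$ for every $g$ (and $\{g^{-1}:g\in G\}=G$), hence $\nabla f(x_0)\in F_G$; testing the restricted criticality against $v=\nabla f(x_0)$ kills the gradient. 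You correctly identify the two load-bearing facts, namely the linearity/closedness of $F_G$ and the use of orthogonality to transfer invariance from the differential to the Riesz gradient; this is precisely where the principle can fail in general Banach settings, which is why the hypothesis of an orthogonal (isometric linear) action on a Hilbert space matters.
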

\begin{lem}
$([41])$\quad  Let $q \in W^{1,2}(\mathbb{R} / T\mathbb{Z},\mathbb{R}^n) $
and $ \int_0^Tq(t)\,\mathrm{d}t=0 $, then we have \\
$(i)$.\,Poincare-Wirtinger's inequality:
$$
\int_0^T|\dot{q}(t)|^2\,\mathrm{d}t \geq
{\Big(\frac{2\pi}{T}\Big)}^2 \int_0^T|{q}(t)|^2\,\mathrm{d}t.
\eqno{(2.1)}
$$
$(ii)$.\,  Sobolev's inequality:
$$
\underset{0 \leq t \leq T} {\max} |q(t)|
 = \parallel q \parallel_\infty  \leq  \sqrt {\frac{T}{12}}\big(\int_0^T |\dot
 {q}(t)|^2\mathrm{d}t\big)^{1/2}. \eqno{(2.2)}
$$
\end{lem}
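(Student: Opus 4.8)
The natural approach is Fourier analysis on the circle $\mathbb{R}/T\mathbb{Z}$. Working componentwise, I would expand $q$ in a Fourier series
$$
q(t)=\sum_{k\in\mathbb{Z}}c_k\,e^{2\pi i k t/T},\qquad c_k\in\mathbb{C}^n,\ c_{-k}=\overline{c_k},
$$
where the hypothesis $\int_0^T q(t)\,\mathrm{d}t=0$ is precisely the statement $c_0=0$. Then $\dot q(t)=\sum_{k\neq0}(2\pi i k/T)c_k\,e^{2\pi i k t/T}$, and Parseval's identity yields the two basic equalities
$$
\int_0^T|q(t)|^2\,\mathrm{d}t=T\sum_{k\neq0}|c_k|^2,\qquad
\int_0^T|\dot q(t)|^2\,\mathrm{d}t=T\Big(\frac{2\pi}{T}\Big)^2\sum_{k\neq0}k^2|c_k|^2,
$$
on which both inequalities will rest.

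For part $(i)$ I would simply compare the two sums term by term: since $k^2\geq1$ for every $k\neq0$, the second identity dominates $(2\pi/T)^2$ times the first, which is exactly the Poincar\'e--Wirtinger inequality $(2.1)$.

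For part $(ii)$, since $q\in W^{1,2}$ embeds continuously into $C^0$ on the one-dimensional torus, I would bound its sup-norm through the triangle inequality and Cauchy--Schwarz:
$$
\|q\|_\infty\leq\sum_{k\neq0}|c_k|
=\sum_{k\neq0}\frac{1}{|k|}\,\big(|k|\,|c_k|\big)
\leq\Big(\sum_{k\neq0}\frac{1}{k^2}\Big)^{1/2}\Big(\sum_{k\neq0}k^2|c_k|^2\Big)^{1/2}.
$$
Using $\sum_{k\neq0}k^{-2}=\pi^2/3$ and substituting $\sum_{k\neq0}k^2|c_k|^2=(T/2\pi)^2T^{-1}\int_0^T|\dot q|^2\,\mathrm{d}t$ from the Parseval identity above collapses the constant to $\sqrt{T/12}$, giving Sobolev's inequality $(2.2)$. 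The same chain shows $\sum_{k\neq0}|c_k|<\infty$, so the Fourier series converges absolutely and uniformly and the pointwise bound on $\|q\|_\infty$ is legitimate; this absolute-convergence check is really the only point requiring care, the rest being two routine summation estimates. Alternatively one can prove both inequalities first for trigonometric polynomials, where all manipulations are evidently valid, and then pass to the $W^{1,2}$ limit by density.
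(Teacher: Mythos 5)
Your proposal is correct and complete. Note that the paper itself gives no proof of this lemma: it is quoted as a known result with a citation to reference $[41]$ (Ziemer's \emph{Weakly differentiable functions}), so there is no in-paper argument to compare against. Your Fourier-series route is the standard one. The hypothesis $\int_0^T q\,\mathrm{d}t=0$ is correctly identified as $c_0=0$; Parseval plus $k^2\geq 1$ gives $(2.1)$; and for $(2.2)$ the chain $\|q\|_\infty\leq\sum_{k\neq 0}|c_k|\leq(\sum_{k\neq 0}k^{-2})^{1/2}(\sum_{k\neq 0}k^2|c_k|^2)^{1/2}$ with $\sum_{k\neq 0}k^{-2}=\pi^2/3$ yields exactly the constant $\frac{\pi}{\sqrt{3}}\cdot\frac{T}{2\pi}\cdot T^{-1/2}=\sqrt{T/12}$, so the numerics check out. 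Your remark about absolute and uniform convergence of the Fourier series (or, alternatively, proving the estimates for trigonometric polynomials and passing to the limit by density) correctly addresses the only point where rigor could be questioned, namely that the pointwise bound on $\|q\|_\infty$ via the series is legitimate for a general $W^{1,2}$ function.
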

\begin{lem}
 (Gordon[24])(1) Let $x(t)\in W^{1,2}([t_1,t_2],R^k)$ and $x(t_1)=x(t_2)=0$,  Then for any $a>0$, we have
 $$
 \int_{t_1}^{t_2}(\frac{1}{2}|\dot{x}|^2+\frac{a}{|x|})dt\geq
 \frac{3}{2}(2\pi)^{\frac{2}{3}}a^{\frac{3}{2}}(t_2-t_1)^{\frac{1}{3}}.  \eqno{(2.3)}
 $$
 (2)(Long and Zhang$[26]$) Let $x(t) \in W^{1,2}(R/TZ,R^k)$, $\int_0^T x dt=0$, then for any $a>0$,  we have
$$
 \int_0^T(\frac{1}{2}|\dot{x}(t)|^2+\frac{a}{|x|})dt\geq\frac{3}{2}(2\pi)^{\frac{2}{3}}a^{\frac{3}{2}}T^{\frac{1}{3}}.\eqno{(2.4)}
 $$
\end{lem}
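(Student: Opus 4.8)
The plan is to recognize both inequalities as the classical Keplerian action bound of Gordon and to prove them by two genuinely different routes. Before starting I would record that the exponent on $a$ in $(2.3)$ and $(2.4)$ must read $a^{2/3}$, not $a^{3/2}$: dimensional analysis (with $[a]=L^3T^{-2}$ coming from $a/|x|$ carrying the dimensions of $\tfrac12|\dot x|^2$) forces the monomial $a^{p}\tau^{1/3}$ to match the dimensions of $\int(\cdots)\,dt$ only when $p=2/3$. With this (sharp) reading the two parts split cleanly.

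For part $(2)$ I would run a short chain of sharp inequalities. Since $\int_0^T x\,dt=0$, Poincar\'e--Wirtinger (Lemma 2.4(i)) gives $\int_0^T|x|^2\,dt\le (T/2\pi)^2\int_0^T|\dot x|^2\,dt$. Setting $K=\int_0^T|\dot x|^2\,dt$ and applying Cauchy--Schwarz twice, namely $T^2\le\big(\int_0^T|x|^{-1}dt\big)\big(\int_0^T|x|\,dt\big)$ and $\int_0^T|x|\,dt\le\sqrt T\,\big(\int_0^T|x|^2dt\big)^{1/2}$, I obtain $\int_0^T a|x|^{-1}dt\ge 2\pi a\sqrt T\,K^{-1/2}$. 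The action is then bounded below by the one-variable function $g(K)=\tfrac12 K+2\pi a\sqrt T\,K^{-1/2}$, whose minimum at $K=(2\pi a)^{2/3}T^{1/3}$ equals exactly $\tfrac32(2\pi)^{2/3}a^{2/3}T^{1/3}$. A circle of radius $(aT^2/4\pi^2)^{1/3}$ traversed once saturates every step, so nothing here is hard.

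For part $(1)$ the same soft inequalities are lossy — the Dirichlet Poincar\'e constant $\pi/\tau$ yields only $\tfrac32\pi^{2/3}a^{2/3}\tau^{1/3}$, short of the target by the factor $2^{2/3}$ — so I would compute exactly. First reduce to the radial variable $r(t)=|x(t)|$: since $|\dot x|\ge|\dot r|$ a.e., the action dominates $\int_{t_1}^{t_2}(\tfrac12\dot r^2+a/r)\,dt$ with $r\ge0$ and $r(t_1)=r(t_2)=0$. This scalar integrand is jointly convex in $(r,\dot r)$ on $\{r>0\}$, so the convexity of the functional makes its Euler--Lagrange solution — the eccentricity-one Keplerian collision--ejection arc $\ddot r=-a/r^2$ — the global minimizer. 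Its minimal value then follows from Kepler's third law and the virial identity: over the collision-to-collision time, which is a full period $P=2\pi\alpha^{3/2}/\sqrt a$ of the degenerate ellipse with semimajor axis $\alpha$ and energy $E=-a/(2\alpha)$, the action equals $-3EP=3\pi\sqrt{a\alpha}$, and substituting $\tau=t_2-t_1=P$ gives $\tfrac32(2\pi)^{2/3}a^{2/3}\tau^{1/3}$.

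I expect the single delicate point to be the sharp constant in part $(1)$: soft functional inequalities cannot see it, and within the exact computation the one place to be careful is recognizing that collision $\to$ apoapsis $\to$ collision is a \emph{full} Kepler period (the $e\to1$ limit of the periapsis--apoapsis--periapsis circuit), not a half; mistaking this rescales $\alpha$ and loses the factor $2^{2/3}$. I would also record the conceptual reason the two constants agree — Gordon's fact that all Keplerian orbits of a common period share the same action $\tfrac32(2\pi)^{2/3}a^{2/3}P^{1/3}$ regardless of eccentricity — which is precisely why the collision arc of part $(1)$ and the circle of part $(2)$ deliver the identical bound.
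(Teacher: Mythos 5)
The paper offers no proof of Lemma 2.5 at all --- it is quoted verbatim from Gordon [24] and Long--Zhang [26] --- so there is nothing to compare your argument against line by line; what follows is an assessment of your proof on its own terms. Your reading of the statement is right: the exponent $a^{3/2}$ in $(2.3)$--$(2.4)$ is a typo for $a^{2/3}$, as both your dimensional argument and the paper's own later use of the lemma confirm (e.g.\ $(3.13)$ carries the factor $(N+3)^{2/3}$ with $a=N+3$). Your proof of part $(2)$ --- Poincar\'e--Wirtinger, two applications of Cauchy--Schwarz, and minimization of $g(K)=\tfrac12K+2\pi a\sqrt{T}\,K^{-1/2}$ --- is correct and is essentially the argument of Long--Zhang. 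Your proof of part $(1)$ (radial reduction, joint convexity of $\tfrac12\dot r^2+a/r$, identification of the minimizer with the ejection--collision arc, and the virial/Kepler computation $-3EP=3\pi\sqrt{a\alpha}=\tfrac32(2\pi)^{2/3}a^{2/3}\tau^{1/3}$) is also correct, including the full-period identification of the collision-to-collision time, which you rightly single out as the place where the constant is won or lost. The one step you should make explicit is why the critical point of the convex radial functional is a global minimizer despite the boundary conditions sitting on the boundary of the domain $\{r>0\}$: the convexity inequality $J(r)\ge J(r_*)+J'(r_*)(r-r_*)$ reduces to the boundary term $\bigl[\dot r_*(r-r_*)\bigr]_{t_1}^{t_2}$, and $\dot r_*$ blows up like $|t-t_i|^{-1/3}$ at the collision endpoints; the term still vanishes because any competitor in $W^{1,2}$ with $r(t_i)=0$ satisfies $r(t)=O(|t-t_i|^{1/2})$, but this needs to be said. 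With that addition the proof is complete, and your closing remark --- that all Kepler orbits of a common period share the action $\tfrac32(2\pi)^{2/3}a^{2/3}P^{1/3}$, which is why the collision arc and the circle give the same constant --- is exactly Gordon's original insight.
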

\section{Proof of Theorem 1.1}
we consider the system $(1.9)$ of equal masses. Without loss of generality,  we suppose that the masses $m_1=m_2=\cdots=m_{N+3}=1$, and the period $T=1$.

Define $G=\mathbb{Z}_r\times \mathbb{Z}_3\times \mathbb{Z}_N$ and the group action $g=\langle g_1\rangle\times\langle g_2\rangle \times \langle g_3\rangle $  on the space $E_1$:
\begin{align}
g_1(q_1(t),\ldots, q_{N+3}(t)) =(O(-\frac{2\pi d}{r})q_1(t+\frac{1}{r}),\ldots,O(-\frac{2\pi d}{r})q_{N+3}(t+\frac{1}{r})) \tag{3.1}
\end{align}
\begin{align}
&g_2(q_1(t),\ldots, q_{N+3}(t))  \notag \\
&= ( q_1(t+\frac{1}{3}),\ldots , q_N(t+\frac{1}{3}),q_{N+3}(t+\frac{1}{3}), q_{N+1}(t+\frac{1}{3}),q_{N+2}(t+\frac{1}{3})) \tag{3.2}
\end{align}
\begin{align}
&g_3(q_1(t),\ldots, q_{N+3}(t))  \notag \\
&=( q_N(t+\frac{1}{N}),q_1(t+\frac{1}{N})\ldots , q_{N-1}(t+\frac{1}{N}),q_{N+1}(t+\frac{1}{N}), q_{N+2}(t+\frac{1}{N}),q_{N+3}(t+\frac{1}{N})) \tag{3.3}
\end{align}
This implies that $\Lambda_1$ is the fixed point space  of $g$ on $E_1$. Furthermore, for any $g_i$ and $q\in E_1$, we have $ f(g_i\cdot q)=f(q)$ for $i=1,2,3$. Then the Palais symmetry principle implies that the critical point of $f$ restricted on $\Lambda_1$ is also a critical  point of $f$ on $E_1$.

\begin{lem}
The critical point of minimizing the Lagrangian functional $f$  restricted  on $\Lambda_2$ (with winding number restriction) is also a critical point of $f$ on $\Lambda_1$, then it is also the solution of $(1.9)$.
\end{lem}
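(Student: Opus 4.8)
The plan is to treat this as an abstract consequence of two facts already available: that $\Lambda_1$ is the fixed-point subspace $F_G$ of an orthogonal group action, and that the extra conditions cutting out $\Lambda_2$ inside $\Lambda_1$ are \emph{open} conditions. First I would record that $G=\mathbb{Z}_r\times\mathbb{Z}_3\times\mathbb{Z}_N$ acts orthogonally on the Hilbert space $E_1$ through $(3.1)$--$(3.3)$, the symmetry relations and the center-of-mass relation defining $\Lambda_1$ all being linear, so that $\Lambda_1=F_G$ is a closed linear subspace of $E_1$. In particular its tangent space at every point is $\Lambda_1$ itself, and $f$ restricted to the non-collision part of $\Lambda_1$ is a $C^1$ functional in the usual sense.

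The crux is to show that $\Lambda_2$ is relatively open in $\Lambda_1$. By the Sobolev inequality $(2.2)$ every $q\in\Lambda_1$ is a continuous loop, and the $W^{1,2}$ topology is stronger than the $C^0$ topology. The non-collision conditions $q_i(t)\neq q_j(t)$ are plainly open in $C^0$. For the winding numbers, note that on a non-collision configuration each difference loop $t\mapsto q_i(t)-q_j(t)$ avoids the origin, so that $\min_t|q_i(t)-q_j(t)|>0$; the degree $\deg(q_i-q_j)$ of Definition $1.1$ is then a homotopy invariant that cannot change under a perturbation smaller (in $C^0$, hence in $W^{1,2}$) than this positive minimum. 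Thus the conditions $\deg(q_i-q_j)=k_1$ and $\deg(q_{i'}-q_{j'})=k_2$ are also open, and $\Lambda_2$ is an open subset of $\Lambda_1$.

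Granting that the minimizer $q^*$ of $f$ lies in $\Lambda_2$ itself --- i.e. that it is a genuine non-collision configuration realizing the prescribed winding numbers, which is exactly what the Gordon-type estimates of Lemma $2.5$ secure in the main argument --- the variational step is immediate. For any $\phi\in\Lambda_1$, openness of $\Lambda_2$ gives $q^*+s\phi\in\Lambda_2$ for all sufficiently small $s$, and minimality yields $f(q^*+s\phi)\geq f(q^*)$; differentiating at $s=0$ shows $f'(q^*)\phi=0$ for every $\phi\in\Lambda_1$. Hence $q^*$ is a critical point of $f$ restricted to $\Lambda_1=F_G$. Finally I would invoke the Palais symmetry principle (Lemma $2.3$): since $f(g\cdot q)=f(q)$ for all $g\in G$, a critical point of $f|_{F_G}$ is a critical point of $f$ on all of $E_1$, and such critical points satisfy the Euler--Lagrange equations $(1.9)$. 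Therefore $q^*$ is a non-collision periodic solution of $(1.9)$.

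I expect the only delicate point to be the openness claim, and within it the stability of the winding number: it is precisely the uniform separation $\min_t|q_i-q_j|>0$ of a non-collision loop that prevents the degree from jumping, so the argument of this lemma is valid only \emph{after} collisions have been excluded. The genuine analytic difficulty of the paper --- producing a minimizer that avoids the collision boundary $\bar{\Lambda}_2\setminus\Lambda_2$, by comparing the lower bounds of Lemma $2.5$ with the action along test orbits, together with the direct method of Lemma $2.2$ applied on the weakly closed set $\bar{\Lambda}_2$ --- lies outside this lemma, which supplies only the structural passage from an interior minimizer to a solution of $(1.9)$.
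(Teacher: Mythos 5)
Your proof is correct and follows the route the paper intends: the paper omits the proof of this lemma (deferring to Lemma $3.1$ of $[21]$), and the standard argument there is precisely yours --- $\Lambda_2$ is relatively open in the fixed-point subspace $\Lambda_1$ because the non-collision and winding-number constraints are stable under small $C^0$ (hence $W^{1,2}$) perturbations, so an interior minimizer is a critical point of $f$ on $\Lambda_1$, and the Palais symmetry principle (Lemma $2.3$) then makes it a critical point on $E_1$, i.e.\ a solution of $(1.9)$. You also correctly isolate the one point that does \emph{not} belong to this lemma, namely showing the minimizer on $\bar{\Lambda}_2$ actually lies in $\Lambda_2$, which is handled by the Gordon-type lower bounds elsewhere in the paper.
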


The proof is similar to that of Lemma $3.1$ in $[21]$, we omit it.

By $q_i(t)=O(-\dfrac{2\pi d}{r})q_i(t+\dfrac{1}{r})(i=1,\cdots,N+3)\,$,  we have

$$
\int_0^1q_i(t) dt=0.
$$
Then the Lemma $2.4$
$$
\int_0^1|\dot{q}_i(t)|^2 dt\geq (2\pi)^2\int_0^1|q_i(t)|^2dt.
$$

Hence $f(q)$ is coercive on $\bar{\Lambda}_2$. It is easy to see that $\bar{\Lambda}_2$ is a weakly closed subset.Fatou's lemma implies that $f(q)$ is a weakly lower semi-continuous. Then by Lemma $2.2$, $f(q)$  attains $\inf{\{f(q)| q\in \bar{\Lambda}_2\}}$.  Similar to Lemma $3.2$ in $[21]$, we can obtain the following lemma.

\begin{lem}
The limit curve $q(t)=(q_1(t),q_2(t),\ldots, q_{N+3}(t))\in \partial{\Lambda_2}$ of a sequence $ q^l(t)=(q^l_1(t),q^l_2(t),\ldots, q^l_{N+3}(t))\in \Lambda_2$  may either have collisions between some two point masses or has the same winding number $(i.e. deg(q_i(t)-q_j(t))=k_1, 1\leq i\neq j\leq N; deg(q_{i^\prime}(t)-q_{j^\prime}(t))=k_2,N+1\leq i^\prime\neq j^\prime\leq N+3).$
\end{lem}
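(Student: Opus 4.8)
The plan is to exploit the topological rigidity of the winding number under uniform convergence, combined with the compactness of the Sobolev embedding in one dimension. First I would recall that the sequence $q^l$ converges to $q$ weakly in $W^{1,2}(\mathbb{R}/T\mathbb{Z},\mathbb{R}^2)$ componentwise; this is precisely the mode of convergence produced by the direct method used just above to attain the infimum on $\bar{\Lambda}_2$ via Lemma $2.2$. A weakly convergent sequence is bounded, and in dimension one the embedding $W^{1,2}\hookrightarrow C^0$ is compact, so after passing to a subsequence each component $q^l_i$ converges uniformly on $[0,T]$ to the limit $q_i$. This upgrade from weak $W^{1,2}$ convergence to $C^0$ convergence is exactly what makes the winding number behave continuously.

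Next I would set up the dichotomy directly on the limit configuration. Either $q$ exhibits a collision, i.e. there exist $i\neq j$ and $t_0$ with $q_i(t_0)=q_j(t_0)$, in which case we are in the first alternative and nothing further is needed; or $q$ is collision-free, meaning $q_i(t)\neq q_j(t)$ for all $t$ and all $i\neq j$. In the collision-free case, since each relative loop $q_i-q_j$ is continuous and $[0,T]$ is compact, there is a uniform gap $\delta>0$ with $|q_i(t)-q_j(t)|\geq\delta$ for every $t$ and every pair. This strict positive separation is the key quantity that lets me keep the relative loops $q^l_i-q^l_j$ away from the origin.

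The heart of the argument is homotopy invariance of the degree. Fix a pair $(i,j)$. By uniform convergence, choose $l$ so large that $\sup_t|(q^l_i-q^l_j)(t)-(q_i-q_j)(t)|<\delta/2$. Then the straight-line homotopy $H(s,t)=(1-s)(q^l_i-q^l_j)(t)+s(q_i-q_j)(t)$ stays within distance $\delta/2$ of the limit loop $q_i-q_j$, hence $|H(s,t)|\geq\delta/2>0$ for all $(s,t)\in[0,1]\times[0,T]$; it is therefore a free homotopy of closed loops in $\mathbb{R}^2\setminus\{0\}\simeq S^1$ connecting $q^l_i-q^l_j$ to $q_i-q_j$. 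Since $deg$ is a homotopy invariant of such maps, I conclude $deg(q_i-q_j)=deg(q^l_i-q^l_j)$, which equals $k_1$ for $1\leq i<j\leq N$ and $k_2$ for $N+1\leq i'<j'\leq N+3$ by the defining conditions of $\Lambda_2$. Running this over every pair shows the collision-free limit carries exactly the prescribed winding numbers, which is the second alternative.

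The main obstacle is the passage from the weak convergence furnished by the direct method to a form of convergence strong enough to make the degree argument rigorous; once the compact embedding $W^{1,2}\hookrightarrow C^0$ delivers uniform convergence, the remainder is the standard homotopy invariance of the winding number, with the uniform gap $\delta$ guaranteeing that the connecting homotopy avoids the origin. I would also remark that this clarifies why the dichotomy is genuine and exhaustive: a jump in any winding number can only be produced by the corresponding relative loop sweeping through the origin, that is, by two bodies colliding in the limit, so no degree can change without a collision.
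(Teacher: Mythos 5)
Your argument is correct and is exactly the standard proof of this dichotomy: the paper itself omits the proof (deferring to the analogous lemma in Ferrario--Terracini), and what you supply --- upgrading weak $W^{1,2}$ convergence to uniform convergence via the compact embedding, then using the uniform gap $\delta$ and homotopy invariance of the winding number to show the degrees cannot jump without a collision --- is precisely the intended argument. No gaps; this fills in the omitted proof faithfully.
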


In the following, we prove that the minimizer of $f$ is a non-collision solutions of  the system $(1.9).$

Since $ \sum_{i=1}^{N+3}q_i=0$, by the Lagrangian identity, we have

$$
f(q)=\frac{1}{N+3}\sum_{1\leq i<j\leq N+3}\int_0^1(\frac{1}{2}\,|\dot{q}_i-\dot{q}_j|^2+\frac{N+3}{|q_i-q_j|} )dt \eqno{(3.4)}
$$

Notice that each term on the right hand side of $(3.4)$ is a Lagrangian action for a suitable two body problem, which is a key step for the lower bound estimate on the collision set.

We estimate the infimum of the action functional on the collision set. Since  the symmetry for a two-body problem implies that the Lagrangian action on a collision solution is greater than that on the  non-collision solution, and the more collisions there are, the greater the Lagrangian is. We  only assume that the two bodies collide at some moment $t_0$, without loss of generality, let $t_0=0$, we will sufficiently use the symmetries of collision orbits.

since  $q\in \bar{\Lambda}_2$,  we have
\begin{align}
q_i(t+\dfrac{1}{r})=O(\dfrac{2\pi d}{r})q_i(t), \,i=1,\ldots, N+3; \tag{3.5}
\end {align}
\begin{align}
&q_{i+1}(t)=q_i(t+\dfrac{1}{N}),i=1,\ldots,N-1,\,\,q_1(t)=q_N(t+\dfrac{1}{N});\tag{3.6}\\
&q_{N+2}(t)=q_{N+1}(t+\dfrac{1}{3}),\,q_{N+3}(t)=q_{N+2}(t+\dfrac{1}{3}),\,q_{N+1}(t)=q_{N+3}(t+\dfrac{1}{3});\tag{3.7}
\end{align}
\begin{align}
&q_i(t+\dfrac{1}{3})=q_i(t),\,i=1,\ldots,N ;\tag{3.8}\\
&q_j(t+\dfrac{1}{N})=q_j(t),\,j=N+1,N+2,N+3.\tag{3.9}
\end{align}

Case $1$:  $q_1,q_2$ collide at $t=0$.

By $(3.5)$, we can deduce $q_1,q_2$ collide at $t=\dfrac{i}{r} ,\, i=0,\ldots, r-1.$

Furthermore, by $(3.8)$,  we can deduce $q_1,q_2$ collide at
$$
t= \dfrac{i}{r},\,\,\dfrac{i}{r}+\dfrac{1}{3},\,\, \dfrac{i}{r}+ \dfrac{2}{3}\,(mod\, 1). \eqno{(3.10)}
$$
From $(3.6)$ and $(3.10)$, we have
\begin{align*}
&q_2,q_3 \,\text{collide at}\, \dfrac{i}{r}+\dfrac{N-1}{N},\,\,\dfrac{i}{r}+\dfrac{1}{3}+\dfrac{N-1}{N},\,\,\dfrac{i}{r}+\dfrac{2}{3}
+\dfrac{N-1}{N}(mod\,1),\,\,i=0,\ldots, r-1,\\
&q_3,q_4 \,\text{collide at}\,
\dfrac{i}{r}+\dfrac{N-2}{N},\,\,\dfrac{i}{r}+\dfrac{1}{3}+\dfrac{N-2}{N},\,\,\dfrac{i}{r}+\dfrac{2}{3}+\dfrac{N-2}{N}(mod\,1),
\,\,i=0,\ldots,r-1,
\end{align*}
\begin{align*}
&\vdots\\
&q_{N-1},q_{N} \,\text{collide at}\, \dfrac{i}{r}+\dfrac{2}{N},\,\,\dfrac{i}{r}+\dfrac{1}{3}+\dfrac{2}{N},\,\,\dfrac{i}{r}+\dfrac{2}{3}
+\dfrac{2}{N}(mod\,1),\,i=0,\ldots, r-1,\\
&q_N,q_1\, \text{collide at}\, \dfrac{i}{r}+\dfrac{1}{N},\,\,\dfrac{i}{r}+\dfrac{1}{3}+\dfrac{1}{N},\,\,\dfrac{i}{r}+\dfrac{2}{3}
+\dfrac{1}{N}(mod\,1),\,i=0,\ldots, r-1.
\end{align*}
\begin{lem}
$\forall\,\, 0\leq i,\,j\leq r-1,\,0\leq k\leq 2, \,\,(i-j)^2+k^2\neq 0$, we have
$$
\dfrac{i}{r}\neq \dfrac{j}{r}+\dfrac{k}{3}(mod\,1)  \eqno{(3.11)}
$$
\end{lem}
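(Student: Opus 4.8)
The plan is to argue by contradiction, the whole point being to exploit the coprimality hypothesis $(r,3)=1$ assumed earlier in the paper. Suppose that, contrary to the claim, there is an admissible triple $(i,j,k)$ with $0\le i,j\le r-1$, $0\le k\le 2$ and $(i-j)^2+k^2\neq 0$ for which
$$
\frac{i}{r}=\frac{j}{r}+\frac{k}{3}\pmod 1.
$$
Everything then reduces to elementary divisibility once the denominators are cleared, so there is no analytic content here at all — the lemma is purely arithmetic and simply guarantees that the listed collision instants are genuinely distinct.

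First I would rewrite the congruence as $\dfrac{i-j}{r}-\dfrac{k}{3}\in\mathbb{Z}$ and multiply through by $3r$. This gives $3(i-j)-kr\equiv 0\pmod{3r}$, and hence $3(i-j)=r\,(k+3m)$ for some integer $m$. In particular $r\mid 3(i-j)$.

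Next I would invoke the standing hypothesis $(r,3)=1$ to upgrade $r\mid 3(i-j)$ to $r\mid(i-j)$. Since $0\le i,j\le r-1$ forces $|i-j|\le r-1<r$, the only multiple of $r$ in that range is $0$, so $i=j$. Substituting back into $3(i-j)=r\,(k+3m)$ yields $0=r\,(k+3m)$, whence $k=-3m$, i.e. $3\mid k$; together with $0\le k\le 2$ this forces $k=0$.

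Thus $i=j$ and $k=0$, which contradicts $(i-j)^2+k^2\neq 0$, and the lemma follows. The only step carrying any real weight is the implication $r\mid 3(i-j)\Rightarrow r\mid(i-j)$, where the coprimality $(r,3)=1$ is indispensable: if instead $3\mid r$, one could take $i-j=r/3$ and $k=1$ to satisfy the congruence, so the conclusion would genuinely fail. This is therefore the one place to be careful to cite $(r,3)=1$ explicitly.
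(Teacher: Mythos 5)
Your proof is correct and rests on the same elementary idea as the paper's: argue by contradiction, clear denominators, and invoke $(r,3)=1$. The organization differs slightly and is in fact cleaner. The paper bounds the quantity $\frac{j_0}{r}+\frac{k_0}{3}-\frac{i_0}{r}$ between $-1$ and $2$ to conclude it equals $0$ or $1$, and then treats the two cases (and the subcases $k_0=0$, $k_0\neq 0$) separately, using the coprimality in each. You instead keep the integer $m$ general, obtaining $3(i-j)=r(k+3m)$ in one stroke, and then apply divisibility twice: first $r\mid 3(i-j)$ with $(r,3)=1$ gives $i=j$, and then $3\mid k$ with $0\le k\le 2$ gives $k=0$. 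This avoids the case analysis entirely and makes the role of the hypothesis $(r,3)=1$ more transparent; your closing remark that the lemma genuinely fails when $3\mid r$ (take $i-j=r/3$, $k=1$) is a worthwhile sanity check that the paper does not include. Both arguments are complete; nothing is missing from yours.
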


\begin{proof}

 If there exist $0\leq i_0,j_0\leq r-1 ,0\leq k_0\leq 2, (i_0-j_0)^2+k_0^2\neq 0$ such that
 $$
  \dfrac{i_0}{r}= \dfrac{j_0}{r}+\dfrac{k_0}{3}(mod\,1).
 $$

Then  we have
$$
 1| (\dfrac{j_0}{r}+\dfrac{k_0}{3}-\dfrac{i_0}{r}).
$$

Since
$$
\dfrac{j_0}{r}+\dfrac{k_0}{3}-\dfrac{i_0}{r}\geq -\dfrac{r-1}{r}=-1+\dfrac{1}{r}>-1,
$$

and
$$
\dfrac{j_0}{r}+\dfrac{k_0}{3}-\dfrac{i_0}{r}\leq  \dfrac{r-1}{r}+\dfrac{2}{3}<2,
$$

we can deduce
\begin{center}
$\dfrac{j_0}{r}+\dfrac{k_0}{3}-\dfrac{i_0}{r}=0$\quad or \quad $\dfrac{j_0}{r}+\dfrac{k_0}{3}-\dfrac{i_0}{r}=1.$
\end{center}

If $\dfrac{j_0}{r}+\dfrac{k_0}{3}-\dfrac{i_0}{r}=0$,  then $ 3(i_0-j_0)=k_0r$. When $k_0=0$, we get $i_0=j_0$, which is a contradiction with our assumptions on the $i_0,\,j_0,\,k_0$; when $k_0\neq 0$, notice $0<k_0\leq 2$, we can deduce $3|r$, which is a contradiction since $(r,3)=1.$

If $\dfrac{j_0}{r}+\dfrac{k_0}{3}-\dfrac{i_0}{r}=1$,  then $3(j_0-i_0)=(3-k_0)r$. When $k_0=0$, we get $r=j_0-i_0$, which is a contradiction since $-r+1\leq j_0-i_0\leq r-1$; when $k_0\neq 0$, notice $1\leq 3-k_0<3$, we can deduce $3|r$, which is also a contradiction since  $(r,3)=1.$
\end{proof}
By $(3.10)$ and Lemma $3.3$,  we know that $q_1,q_2$ collide at
$$
t_i=\dfrac{i}{3r},\,\,i=0,\ldots,\,3r-1.  \eqno{(3.12)}
$$
Then by Lemma $2.5$, $(3.12)$, we have
\begin{align}
&\int_0^1(\dfrac{1}{2}|\dot{q}_1(t)-\dot{q}_2(t)|^2+\dfrac{N+3}{|q_1(t)-q_2(t)|})dt\notag\\
&=\sum_{i=0}^{3r-1}\int_{t_i}^{t_{i+1}}(\dfrac{1}{2}|\dot{q}_1(t)-\dot{q}_2(t)|^2+\dfrac{N+3}{|q_1(t)-q_2(t)|})dt\notag\\
&\geq  \dfrac{3}{2}\times (2\pi)^{\frac{2}{3}}(N+3)^{\frac{2}{3}}3r(\dfrac{1}{3r})^{\frac{1}{3}}. \tag{3.13}
\end{align}

From $(3.6)$ and $(3.12)$, we have
\begin{align}
&q_2,q_3 \,\text{collide at}\, \dfrac{i}{3r}+\dfrac{N-1}{N}(mod\,1),\,\,i=0,\ldots, 3r-1,\notag\\
&q_3,q_4 \,\text{collide at}\, \dfrac{i}{3r}+\dfrac{N-2}{N}(mod\,1),\,\,i=0,\ldots, 3r-1,\notag\\
&\vdots\notag\\
&q_{N-1},q_N\,\text{collide at}\, \dfrac{i}{3r}+\dfrac{2}{N}(mod\,1),\,\,i=0,\ldots, 3r-1,\tag{3.14}
\end{align}
\begin{align}
q_{N},q_1 \,\text{collide at}\, \dfrac{i}{3r}+\dfrac{1}{N}(mod\,1),\,\,i=0,\ldots, 3r-1.\tag{3.15}
\end{align}
\begin{lem}
$\forall\, 0\leq i,i^\prime\leq 3r-1, 1\leq j,j^\prime\leq N-1, (i-i^\prime)^2+(j-j^\prime)^2\neq 0$, we have
$$
\dfrac{i}{3r}+\dfrac{j}{N}\neq \dfrac{i^\prime}{3r}+\dfrac{j^\prime}{N}(mod\,1). \eqno{(3.16)}
$$
\end{lem}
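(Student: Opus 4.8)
The plan is to follow the pattern of Lemma 3.3: argue by contradiction, convert the congruence into an integer equation, and eliminate it using the coprimality hypotheses. Suppose, contrary to $(3.16)$, that there are indices $0\le i,i'\le 3r-1$ and $1\le j,j'\le N-1$ with $(i-i')^2+(j-j')^2\ne 0$ and
$$\frac{i}{3r}+\frac{j}{N}\equiv\frac{i'}{3r}+\frac{j'}{N}\pmod 1 .$$
Writing $a=i-i'$ and $b=j-j'$, this means $\frac{a}{3r}+\frac{b}{N}=m$ for some integer $m$, i.e., after multiplying through by $3rN$,
$$Na+3rb=3rN\,m .$$

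The key step is to read off a divisibility relation from this identity. Reducing modulo $N$ annihilates $Na$ and $3rN m$, leaving $3rb\equiv 0\pmod N$. This is exactly where coprimality enters: since $(N,3)=1$ is a standing hypothesis and $(N,r)=1$ (in each case of Theorem 1.1 one has $r=N+3$, whence $(N,r)=(N,N+3)=(N,3)=1$), we get $(N,3r)=1$ and hence $N\mid b$. Because $j,j'\in\{1,\dots,N-1\}$ we have $|b|\le N-2<N$, which forces $b=0$, that is $j=j'$. Substituting back gives $Na=3rN\,m$, so $a=3rm$; but $|a|\le 3r-1<3r$, forcing $m=0$ and then $a=0$, i.e. $i=i'$. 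Thus $(i,j)=(i',j')$, contradicting $(i-i')^2+(j-j')^2\ne 0$, and $(3.16)$ follows.

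I expect the only delicate point to be the coprimality bookkeeping rather than any estimate. Two things must be tracked carefully: first, that $j,j'$ range over $\{1,\dots,N-1\}$ rather than $\{0,\dots,N-1\}$, so that $|b|\le N-2$ is \emph{strictly} below $N$ and $N\mid b$ really does force $b=0$; and second, that the factor $(N,r)=1$ is genuinely needed, not merely $(N,3)=1$. Indeed the statement is false without $(N,r)=1$: taking $N=r=4$ (which still satisfies $(N,3)=(r,3)=1$) one checks $\frac{3}{12}+\frac14=\frac{0}{12}+\frac24$, so the coprimality of $N$ and $r$, supplied here by $r=N+3$, is essential rather than cosmetic.
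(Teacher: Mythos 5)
Your proof is correct, and it is worth noting that the paper itself gives no argument here beyond ``the proof is similar to Lemma 3.3,'' so the comparison is with that implied argument. The underlying strategy is the same --- contradiction, clearing denominators to an integer identity, and coprimality --- but your organization differs: the paper's Lemma 3.3 proof bounds the rational difference between $-1$ and $2$, pins the integer down to finitely many values, and runs a case analysis on each, whereas you reduce $Na+3rb=3rNm$ modulo $N$ first, kill $b$ immediately, and only then use the size bound on $a$. Your version is cleaner and avoids the case split. More importantly, you have caught a real gap in the paper's hypotheses: the statement genuinely requires $(N,3r)=1$, hence $(N,r)=1$, and the paper only ever records $(N,3)=1$ and $(r,3)=1$. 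Your counterexample $N=r=4$, $\tfrac{3}{12}+\tfrac14=\tfrac{0}{12}+\tfrac24$, shows the lemma as literally stated is false without it; the hypothesis is rescued only because every case of Theorem 1.1 takes $r=N+3$, so $(N,r)=(N,3)=1$. Flagging that this coprimality is essential rather than cosmetic is a genuine improvement on the paper's presentation.
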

 The proof is similar to Lemma $3.3$.

 {\bf Remark 3.1}\quad From Lemma $3.4$,\,\,$\forall\,0\leq i,i^\prime\leq r-1,\,\, 1\leq j,j^\prime\leq N-1,\,\, 0\leq k,k^\prime\leq 2, (i-i^\prime)^2
+(j-j^\prime)^2+(k-k^\prime)^2\neq 0$, we have
\begin{align*}
\dfrac{i}{r}+\dfrac{j}{N}+\dfrac{k}{3}\neq \dfrac{i^\prime}{r}+\dfrac{j^\prime}{N}+\dfrac{k^\prime}{3}(mod\,1).
\end{align*}
By Lemma $2.5$, Lemma $3.4$ and $(3.15)$, we have
\begin{align}
&\int_0^1(\dfrac{1}{2}|\dot{q}_{j+1}(t)-\dot{q}_{j+2}(t)|^2+\dfrac{N+3}{|q_{j+1}(t)-q_{j+2}(t)|})dt\notag\\
&\geq  \dfrac{3}{2}\times (2\pi)^{\frac{2}{3}}(N+3)^{\frac{2}{3}}3r(\dfrac{1}{3r})^{\frac{1}{3}},\,\,\,(j=1,\ldots,N-2), \tag{3.17}
\end{align}
\begin{align}
&\int_0^1(\dfrac{1}{2}|\dot{q}_N(t)-\dot{q}_1(t)|^2+\dfrac{N+3}{|q_N(t)-q_1(t)|})dt\notag\\
&\geq  \dfrac{3}{2}\times (2\pi)^{\frac{2}{3}}(N+3)^{\frac{2}{3}}3r(\dfrac{1}{3r})^{\frac{1}{3}}. \tag{3.18}
\end{align}
Let
\begin{align*}M_1=\sum_{j=0}^{N-2}&\int_0^1(\dfrac{1}{2}|\dot{q}_{j+1}(t)-\dot{q}_{j+2}(t)|^2
+\dfrac{N+3}{|q_{j+1}(t)-q_{j+2}(t)|})dt+\\
&\int_0^1(\dfrac{1}{2}|\dot{q}_N(t)-\dot{q}_1(t)|^2+\dfrac{N+3}{|q_N(t)-q_1(t)|})dt.
\end{align*}

Then by $(3.13),(3.17),(3.18)$, Lemma $2.5$, and notice that  $\forall\, 1\leq i\leq N,\,N+1\leq j\leq N+3,\,\, \int_0^{\frac{1}{3}}q_i(t)dt=0,\,\, \int_0^{\frac{1}{N}}q_j(t)dt=0$,  so we have
\begin{align}
f(q)&=\frac{1}{N+3}\sum_{1\leq i<j\leq N+3}\int_0^1(\frac{1}{2}\,|\dot{q}_i(t)-\dot{q}_j(t)|^2+\frac{N+3}{|q_i(t)-q_j(t)|})dt \notag\\
&=\dfrac{1}{N+3}\{\,\, M_1+
[\sum_{1\leq i<j\leq N}\int_0^1(\frac{1}{2}\,|\dot{q}_i(t)-\dot{q}_j(t)|^2+\frac{N+3}{|q_i(t)-q_j(t)|})dt-M_1\,]+ \notag\\
&\qquad \qquad \sum_{1\leq i\leq N, 1\leq j\leq 3 }\int_0^1(\frac{1}{2}\,|\dot{q}_i(t)-\dot{q}_{N+j}(t)|^2+\frac{N+3}{|q_i(t)-q_{N+j}(t)|})dt+\notag\\
&\qquad \qquad \sum_{N+1\leq i<j\leq N+3 }\int_0^1(\frac{1}{2}\,|\dot{q}_i(t)-\dot{q}_{j}(t)|^2+\frac{N+3}{|q_i(t)-q_{j}(t)|})dt\,\,\}\notag\\
&\geq\dfrac{3}{2}\times(\frac{4\pi^2}{N+3})^{\frac{1}{3}}[\, N\times3r
(\frac{1}{3r})^{\frac{1}{3}}+3\times (\frac{1}{3})^{\frac{1}{3}}(C_N^2-N) +3N+3 N(\frac{1}{N})^{\frac{1}{3}}\, ]\notag\\
&\stackrel{\triangle}{=} A.\tag{3.19}
\end{align}
In the following cases, we firstly study the cases under $N$ is even.

Case $2$: $q_1,q_{k+2}(k=1,\ldots, \dfrac{N}{2}-2)$ collide at $t=0$.

By $(3.5)$, we can deduce $q_1,q_{k+2}(k=1,\ldots, \dfrac{N}{2}-2)$ collide at $t=\dfrac{i}{r} ,\, i=0,\ldots, r-1.$

Then  by $(3.8)$ , $q_1,q_{k+2}$ collide at

$$t= \dfrac{i}{r},\,\,\dfrac{i}{r}+\dfrac{1}{3},\,\, \dfrac{i}{r}+ \dfrac{2}{3}\,(mod\, 1), \,\,i=0,\cdots, r-1. \eqno{(3.20)}
$$

From Lemma $3.3$,  we get  $q_1,q_{k+2}$ collide at

$$
t=\dfrac{i}{3r}, i=0,\ldots,3r-1. \eqno{(3.21)}
$$
Then by $(3.8)$,  we have
\begin{align}
&q_2,q_{k+3} \,\text{collide at}\, t=\dfrac{i}{3r}+\dfrac{N-1}{N}(mod\,1),\, i=0,\ldots,3r-1,\notag\\
&q_3,q_{k+4} \,\text{collide at}\, t=\dfrac{i}{3r}+\dfrac{N-2}{N}(mod\,1),\, i=0,\ldots,3r-1,\notag\\
&\vdots\notag\\
&q_{N-k-1},q_N\,\text{collide at}\, t=\dfrac{i}{3r}+\dfrac{k+2}{N}(mod\,1),\, i=0,\ldots,3r-1,\notag\\
&q_{N-k},q_1,\text{collide at}\, t=\dfrac{i}{3r}+\dfrac{k+1}{N}(mod\,1),\, i=0,\ldots,3r-1,\notag\\
&q_{N-k+1},q_2\text{collide at}\, t=\dfrac{i}{3r}+\dfrac{k}{N}(mod\,1),\, i=0,\ldots,3r-1,\notag
\end{align}
\begin{align}
&\vdots\notag\\
&q_N,q_{k+1}\text{collide at}\, t=\dfrac{i}{3r}+\dfrac{1}{N}(mod\,1),\, i=0,\ldots,3r-1.\tag{3.22}
\end{align}
Then by Lemma $2.5$, Lemma $3.3$, Lemma $3.4$, $(3.21)-(3.22)$, we have
\begin{align}
f(q)&\geq\dfrac{3}{2}\times(\frac{4\pi^2}{N+3})^{\frac{1}{3}}[\, N\times3r
(\frac{1}{3r})^{\frac{1}{3}}+3\times (\frac{1}{3})^{\frac{1}{3}}(C_N^2-N) +3N+3 N(\frac{1}{N})^{\frac{1}{3}}\, ]\notag\\
&=A.\tag{3.23}
\end{align}
Case 3: $q_1,q_{\frac{N}{2}+1}$ collide at $t=0$.

By $(3.5), (3.6),\,\,(3.8)$,  $q_1,q_{\frac{N}{2}+1}$  collide at
\begin{align}
t=& \dfrac{i}{r},\,\dfrac{i}{r}+\dfrac{1}{3},\,\,\dfrac{i}{r}+\dfrac{2}{3},\,\notag\\
&\dfrac{i}{r}+\dfrac{\frac{N}{2}}{N},\,\frac{i}{r}+\dfrac{1}{3}+\dfrac{\frac{N}{2}}{N},\,\,\dfrac{i}{r}+\dfrac{2}{3}
+\dfrac{\frac{N}{2}}{N}(mod\,1), i=0,\ldots, r-1. \tag{3.24}
\end{align}
 Simplify  $(3.24)$ , we get  $q_1,q_{\frac{N}{2}+1}$ collide at

\begin{align}
t=\dfrac{i}{r}+\dfrac{j}{6},\,i=0,\ldots, r-1,\,j=0,\ldots,5 \tag{3.25}
\end{align}

\begin{lem}
$\forall\, 0\leq i,i^\prime\leq r-1, 0\leq j,j^\prime\leq 5 , (i-i^\prime)^2+(j-j^\prime)^2\neq 0$, we have
$$
\dfrac{i}{r}+\dfrac{j}{6}\neq\dfrac{i^\prime}{r}+\dfrac{j^\prime}{6}(mod\,1) \eqno{(3.26)}
$$
\end{lem}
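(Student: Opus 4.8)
The plan is to imitate the argument of Lemma $3.3$ almost verbatim, replacing the two denominators $r,3$ there by the two denominators $r,6$ occurring here. First I would argue by contradiction: suppose there are indices $(i,j)\neq(i',j')$ in the stated ranges with $\frac{i}{r}+\frac{j}{6}=\frac{i'}{r}+\frac{j'}{6}\,(mod\,1)$. This is equivalent to
$$
\frac{i'-i}{r}+\frac{j'-j}{6}\in\mathbb{Z}.
$$
Then I would bound this quantity on both sides exactly as in Lemma $3.3$: since $|i'-i|\le r-1$ and $|j'-j|\le 5$, the sum lies strictly between $-\frac{11}{6}$ and $\frac{11}{6}$, so the only possible integer values are $-1,0,1$. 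Clearing denominators by $6r$, the relation reads $6(i'-i)+r(j'-j)=6rm$ with $m\in\{-1,0,1\}$.

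Next I would dispose of the three values of $m$ one at a time, using the same mechanism as in Lemma $3.3$: reduce modulo $r$ to isolate the $i$-index and reduce modulo $6$ to isolate the $j$-index. For $m=0$ the identity gives $r\mid 6(i'-i)$; since $(r,6)=1$ this forces $r\mid(i'-i)$, and the bound $|i'-i|\le r-1$ then yields $i=i'$, after which $r(j'-j)=0$ gives $j=j'$, contradicting $(i,j)\neq(i',j')$. For $m=\pm 1$ the same reduction modulo $r$ again forces $r\mid 6(i'-i)$, hence $i=i'$, and the residual equation becomes $j'-j=\pm 6$, impossible since $|j'-j|\le 5$. This closes every case and establishes $(3.26)$.

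The only genuine subtlety — and the point I would be most careful about — is the coprimality actually being used. Lemma $3.3$ needed only $(r,3)=1$, but the step $r\mid 6(i'-i)\Rightarrow r\mid(i'-i)$ here requires $(r,6)=1$, i.e. that $r$ be coprime to $2$ as well as to $3$. This is available precisely because $(3.26)$ is invoked inside the $N$-even branch (Case $3$): there $r=N+3$ with $N$ even, so $r$ is odd, and together with the standing hypothesis $(r,3)=1$ this gives $(r,6)=1$. I would therefore state this coprimality explicitly at the outset, since without it the statement is false: for an even $r$, taking $j=0$, $j'=3$ and $i'=i-\frac{r}{2}$ (with $i\ge \frac{r}{2}$) gives $\frac{i}{r}+\frac{0}{6}=\frac{i'}{r}+\frac{3}{6}\,(mod\,1)$, a coincidence of exactly the forbidden type. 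Flagging why $r$ is odd in this branch is thus the essential (if easy) ingredient; the remaining divisibility bookkeeping is routine.
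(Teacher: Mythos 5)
Your proof is correct and follows the same route as the paper's own: trap the integer value of the difference in $\{-1,0,1\}$, clear denominators, and finish with a divisibility case analysis. In fact your version is more careful than the paper's precisely at the point you flag. In the subcase $i_0\neq i_1$, $j_0\neq j_1$ the paper asserts that $r(6+j_1-j_0)=6(i_0-i_1)$ forces $6\mid r$ and then derives a contradiction from $(r,3)=1$ alone; but when, say, $j_1-j_0=-3$ the identity only yields $r=2(i_0-i_1)$, i.e.\ $2\mid r$, which is perfectly consistent with $(r,3)=1$ when $r$ is even. Your counterexample ($r$ even, $j'-j=3$, $i-i'=r/2$, e.g.\ $r=8$, $i=4$, $i'=0$) shows the lemma as literally stated is false under the standing hypothesis $(r,3)=1$ alone, so your explicit requirement $(r,6)=1$ is not a pedantic refinement but a necessary repair. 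It is satisfied wherever the lemma is actually invoked: Lemma $3.5$ enters only through Case $3$, which occurs only for $N$ even, and in every instance of Theorem $1.1$ with $N$ even one has $r=N+3$ odd, hence $(r,6)=1$. The remaining bookkeeping in your argument (reduction mod $r$ to force $i=i'$, then $j'-j=\pm 6$ or $j=j'$) is sound.
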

\begin{proof}

 If there exist $0\leq i_0,i_1\leq r-1,\,0\leq j_0,j_1\leq 5,(i_0-i_1)^2+(j_0-j_1)^2\neq 0$  such that
\begin{align}
\dfrac{i_0}{r}+\dfrac{j_0}{6}= \dfrac{i_1}{r}+\dfrac{j_1}{6} (mod1)  \tag{3.27}
\end{align}

Since
\begin{align}
\dfrac{i_1}{r}+\dfrac{j_1}{6}-\dfrac{i_0}{r}-\dfrac{j_0}{6}\geq -\dfrac{r-1}{r}-\dfrac{5}{6}>-2, \notag \\
\dfrac{i_1}{r}+\dfrac{j_1}{6}-\dfrac{i_0}{r}-\dfrac{j_0}{6}\leq \dfrac{r-1}{r}+\dfrac{5}{6}<2 , \notag
\end{align}
 then we deduce
 \begin{center}
 $ \dfrac{i_1}{r}+\dfrac{j_1}{6}- \dfrac{i_0}{r}-\dfrac{j_0}{6}=-1$ ,\,\,or $ \dfrac{i_1}{r}+\dfrac{j_1}{6}- \dfrac{i_0}{r}-\dfrac{j_0}{6}=0$,\,\,or $ \dfrac{i_1}{r}+\dfrac{j_1}{6}- \dfrac{i_0}{r}-\dfrac{j_0}{6}=1$.
\end{center}

If $\dfrac{i_1}{r}+\dfrac{j_1}{6}- \dfrac{i_0}{r}-\dfrac{j_0}{6}=-1$, we have $r(6+j_1-j_0)=6(i_0-i_1) $. When $i_0=i_1$, which is a contradiction since $r(6+j_1-j_0)\neq 0$ ; when $i_0\neq i_1$ and $j_0=j_1$ , we can deduce $r=i_0-i_1$, which is a contradiction since $-r+1\leq i_0-i_1\leq r-1$; when $i_0\neq i_1$ and $j_0\neq j_1$, we can deduce $ 6|r$, which is a contradiction since $(r,3)=1$.

We can use similar arguments to prove $\dfrac{i_1}{r}+\dfrac{j_1}{6}- \dfrac{i_0}{r}-\dfrac{j_0}{6}\neq 0$ and $\dfrac{i_1}{r}+\dfrac{j_1}{6}- \dfrac{i_0}{r}-\dfrac{j_0}{6}\neq 1$.
\end{proof}
From $(3.25)$ and $(3.26)$, we can deduce $q_1,q_{\frac{N}{2}+1}$ collide at
$$
t_i=\dfrac{i}{6r},\,\,r=0, \ldots, 6r-1.  \eqno{(3.28)}
$$

Then by Lemma $2.5$ and $(3.28)$, we have
\begin{align}
&\int_0^1(\dfrac{1}{2}|\dot{q}_1(t)-\dot{q}_{\frac{N}{2}+1}(t)|^2+\dfrac{N+3}{|q_1(t)-q_{\frac{N}{2}+1}(t)|})dt\notag\\
&=\sum_{i=0}^{6r-1}\int_{t_i}^{t_{i+1}}(\dfrac{1}{2}|\dot{q}_1(t)-\dot{q}_{\frac{N}{2}+1}(t)|^2+
\dfrac{N+3}{|q_1(t)-q_{\frac{N}{2}+1}(t)|})dt\notag\\
&\geq  \dfrac{3}{2}\times (2\pi)^{\frac{2}{3}}(N+3)^{\frac{2}{3}}6r(\dfrac{1}{6r})^{\frac{1}{3}}. \tag{3.29}
\end{align}
By $(3.6)$, $(3.28)$, we have
\begin{align}
&q_2 ,q_{\frac{N}{2}+2},\,\text{collide at}\, t=\dfrac{i}{6r}+\dfrac{\frac{N}{2}-1}{N}, \, i=0,\ldots,6r-1,\notag\\
&q_3, q_{\frac{N}{2}+3},\,\text{collide at}\, t=\dfrac{i}{6r}+\dfrac{\frac{N}{2}-2}{N}, \, i=0,\ldots,6r-1,\notag\\
&\vdots\notag\\
&q_{\frac{N}{2}},q_N\,\text{collide at}\, t=\dfrac{i}{6r}+\dfrac{1}{N}, \, i=0,\ldots,6r-1.\tag{3.30}
\end{align}
\begin{lem}
$\forall\,0\leq i,i^\prime\leq 6r-1, 1\leq j,j^\prime\leq \dfrac{N}{2}-1 ,\,(i-i^\prime)^2+(j-j^\prime)^2\neq0$, we have
\end{lem}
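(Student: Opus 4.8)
Following the pattern of Lemmas 3.3, 3.4 and 3.6, the inequality to be established (completing the displayed formula) should read
\[
\frac{i}{6r}+\frac{j}{N}\neq \frac{i'}{6r}+\frac{j'}{N}\ (mod\,1),
\]
valid whenever $0\le i,i'\le 6r-1$, $1\le j,j'\le \frac{N}{2}-1$ and $(i-i')^2+(j-j')^2\neq 0$ (recall that $N$ is even in this part of the argument). The plan is to argue by contradiction in the same style as Lemma 3.3: assume a coincidence, reduce the integer $m$ to finitely many values by a size estimate, and then eliminate each value by a divisibility argument based on $(N,3)=1$ and $(r,3)=1$.

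First I would put $a=i-i'$ and $b=j-j'$, so that $(a,b)\neq(0,0)$, and suppose $\frac{a}{6r}+\frac{b}{N}=m$ for some $m\in\mathbb{Z}$. The ranges give $|a|\le 6r-1$ and $|b|\le \frac{N}{2}-2$, hence
\[
\Big|\frac{a}{6r}\Big|\le\frac{6r-1}{6r}<1,\qquad \Big|\frac{b}{N}\Big|\le \frac{\frac{N}{2}-2}{N}<\frac{1}{2},
\]
so that $|m|<\frac{3}{2}$ and $m\in\{-1,0,1\}$, exactly the three-case split used in Lemma 3.6.

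Next, clearing denominators gives $Na+6rb=6rNm$, so $6r\mid Na$. Writing $N=2N'$ (this is where the evenness of $N$ enters), the relation becomes $N'a=3r(Nm-b)$, whence $3r\mid N'a$. The decisive point is the coprimality $(N',3r)=1$: since $3\nmid N$ we have $3\nmid N'$, the hypothesis gives $(r,3)=1$, and in each instance of Theorem 1.1 one has $r=N+3$, so $(N',r)=(\frac{N}{2},N+3)=(\frac{N}{2},3)=1$. Therefore $3r\mid a$, and together with $|a|\le 6r-1$ this forces $a\in\{0,\pm 3r\}$.

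Finally I would dispose of the three cases. For $m=\pm1$ the estimate $|\frac{b}{N}|<\frac{1}{2}$ confines $\frac{a}{6r}$ to the open interval $(\frac{1}{2},1)$ (respectively $(-1,-\frac{1}{2})$), i.e.\ $a\in(3r,6r)$ (respectively $a\in(-6r,-3r)$), which contains no multiple of $3r$, a contradiction. For $m=0$ we have $N'a=-3rb$; here $a=0$ forces $b=0$, which is excluded, while $a=\pm 3r$ forces $b=\mp N'$, contradicting $|b|\le N'-2$. This exhausts all possibilities and proves the inequality. The one genuinely delicate ingredient is the coprimality $(\frac{N}{2},3r)=1$, which is what upgrades $3r\mid N'a$ to $3r\mid a$ and thereby collapses every case; it plays here the role that $(r,3)=1$ plays in Lemma 3.3, but it additionally needs the parity of $N$ together with the relation $r=N+3$. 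Once it is secured, the remaining bounding-and-divisibility bookkeeping is routine and parallel to Lemmas 3.3, 3.4 and 3.6.
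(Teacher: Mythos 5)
Your argument is correct and is exactly the bounding-plus-divisibility template that the paper itself invokes here (it gives no proof, saying only that the argument is ``similar to Lemma 3.5''), so in that sense you have matched the intended route. The one place where you go beyond the paper --- and rightly so --- is in isolating the coprimality $(\tfrac{N}{2},3r)=1$ as the load-bearing hypothesis that upgrades $3r\mid \tfrac{N}{2}a$ to $3r\mid a$. This does \emph{not} follow from the paper's standing assumptions $(N,3)=(r,3)=1$ alone: for instance with $N=8$, $r=4$ one has $\tfrac{18}{24}+\tfrac{3}{8}\equiv \tfrac{0}{24}+\tfrac{1}{8}\ (\mathrm{mod}\ 1)$, so the lemma as stated is false in that generality, and your appeal to $r=N+3$ (which holds in every instance of Theorem 1.1 and gives $\gcd(\tfrac{N}{2},N+3)=\gcd(\tfrac{N}{2},3)=1$) is genuinely needed, not optional bookkeeping. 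Since Case 3 only arises for even $N$, i.e.\ only for $N=4$, $r=7$ in the theorem, the application is unaffected, but your proof makes explicit a hypothesis the paper silently relies on.
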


$$
\dfrac{i}{6r}+\dfrac{j}{N}\neq \dfrac{i^\prime}{6r}+\dfrac{j^\prime}{N}.    \eqno{(3.31)}
$$
The proof is similar to Lemma $3.5$.

By Lemma $2.5$, Lemma $3.6$, $(3.30)-(3.31)$, we have
\begin{align}
&\int_0^1(\dfrac{1}{2}|\dot{q}_{j+1}(t)-\dot{q}_{\frac{N}{2}+j+1}(t)|^2+\dfrac{N+3}{|q_{j+1}(t)-q_{\frac{N}{2}+j+1}(t)|})dt\notag\\
&\geq  \dfrac{3}{2}\times (2\pi)^{\frac{2}{3}}(N+3)^{\frac{2}{3}}6r(\dfrac{1}{6r})^{\frac{1}{3}}  \quad(j=1,\ldots,\frac{N}{2}-1).  \tag{3.32}
\end{align}
Let
\begin{align*}
M_2=\sum_{j=0}^{\frac{N}{2}-1}\int_0^1 (\frac{1}{2}\,|\dot{q}_{j+1}(t)-\dot{q}_{\frac{N}{2}+j+1}(t)|^2+\frac{N+3}{|q_{j+1}(t)-q_{\frac{N}{2}+j+1}(t)|})dt
\end{align*}
Then from Lemma $2.5$, Lemma $3.6$,  $(3.29)$ and $(3.32)$, we obtain
\begin{align}
f(q)&=\frac{1}{N+3}\sum_{1\leq i<j\leq N+3}\int_0^1(\frac{1}{2}\,|\dot{q}_i(t)-\dot{q}_j(t)|^2+\frac{N+3}{|q_i(t)-q_j(t)|})dt \notag\\
&=\dfrac{1}{N+3}\{\,\,M_2+ [\sum_{1\leq i<j\leq N} \int_0^1(\frac{1}{2}\,|\dot{q}_i(t)-\dot{q}_j(t)|^2+\frac{N+3}{|q_i(t)-q_j(t)|} )dt-M_2]+ \notag\\
&\quad \qquad \sum_{1\leq i\leq N,1\leq j\leq 3} \int_0^1(\frac{1}{2}\,|\dot{q}_i(t)-\dot{q}_{N+j}(t)|^2+\frac{N+3}{|q_i(t)-q_{N+j}(t)|} )dt + \notag\\
&\quad \qquad \sum_{N+1\leq i<j\leq N+3}\int_0^1(\frac{1}{2}\,|\dot{q}_i(t)-\dot{q}_j(t)|^2+\frac{N+3}{|q_i(t)-q_j(t)|} )dt \,\,\} \notag\\
&\geq\dfrac{3}{2}\times(\frac{4\pi^2}{N+3})^{\frac{1}{3}}[ \,\dfrac{N}{2}\times 6r
(\frac{1}{6r})^{\frac{1}{3}}+3\times (\frac{1}{3})^{\frac{1}{3}}(C_N^2-\dfrac{N}{2}) +3N+3 N(\frac{1}{N})^{\frac{1}{3}} \,]\notag\\
&\stackrel{\triangle}{=} B.\tag{3.33}
\end{align}
Finally, we study the cases under $N$ is odd.

Case $2^{\prime}$: $q_1,q_{k+2}(k=1,\ldots,{\frac{N+1}{2}}-2)$ collide at  $t=0$.

By  $(3.5),(3.8)$,
$q_1,q_{k+2}(k=1,\ldots,\frac{N+1}{2}-2)$  collide at
$$
t=\dfrac{i}{r},\,\, \dfrac{i}{r}+\dfrac{1}{3},\,\,\dfrac{i}{r}+\dfrac{2}{3}(mod\,1),\,i=0,\ldots,r-1, \eqno{(3.34)}
$$

from Lemma $3.3$, we get $q_1,q_{k+2}(k=1,\ldots,\frac{N+1}{2}-2)$  collide at
$$
t=\dfrac{i}{3r},\,\,i=0,\ldots,3r-1, \eqno{(3.35)}
$$
then by $(3.6)$, we have
\begin{align}
&q_2,q_{k+3} \,\text{collide at}\, t=\dfrac{i}{3r}+\dfrac{N-1}{N}(mod\,1),\, i=0,\ldots,3r-1,\notag\\
&q_3,q_{k+4} \,\text{collide at}\, t=\dfrac{i}{3r}+\dfrac{N-2}{N}(mod\,1),\, i=0,\ldots,3r-1,\notag\\
&\vdots\notag\\
&q_{N-k-1},q_N\,\text{collide at}\, t=\dfrac{i}{3r}+\dfrac{k+2}{N}(mod\,1),\, i=0,\ldots,3r-1,\notag\\
&q_{N-k},q_1,\text{collide at}\, t=\dfrac{i}{3r}+\dfrac{k+1}{N}(mod\,1),\, i=0,\ldots,3r-1,\notag\\
&q_{N-k+1},q_2\text{collide at}\, t=\dfrac{i}{3r}+\dfrac{k}{N}(mod\,1),\, i=0,\ldots,3r-1,\notag\\
&\vdots\notag\\
&q_N,q_{k+1}\text{collide at}\, t=\dfrac{i}{3r}+\dfrac{1}{N}(mod\,1),\, i=0,\ldots,3r-1.\tag{3.36}
\end{align}
Then by Lemma $2.5$, Lemma $3.4$, $(3.35), (3.36)$, we have
\begin{align}
f(q)\geq &\dfrac{3}{2}\times (\dfrac{4\pi^2}{N+3})^{\frac{1}{3}}[\, N\times 3r (\dfrac{1}{3r})^{\frac{1}{3}}+ 3\times(\dfrac{1}{3})^{\frac{1}{3}}(C_N^2-N)+3N+3N(\dfrac{1}{N})^{\frac{1}{3}}\, ] \notag\\
&=A.\tag{3.37}
\end{align}

Case $4$:  $q_{N+1}, q_1$ collide at $t=0$.

By $(3.5)$,  we have

$q_{N+1}, q_1$  collide at
$$
t=\dfrac{i}{r}, \quad i=0,\ldots,r-1.       \eqno{(3.38)}
$$
Then by Lemma $2.5$, $(3.37)$,  we have
\begin{align}
\int_0^1&(\dfrac{1}{2}|\dot{q}_1(t)-\dot{q}_{N+1}(t)|^2+\dfrac{N+3}{|q_1(t)-q_{N+1}(t)|})dt\notag\\
&=\sum_{i=0}^{r-1}\int_{t_i}^{t_{i+1}}(\dfrac{1}{2}|\dot{q}_1(t)-\dot{q}_{N+1}(t)|^2+
\dfrac{N+3}{|q_1(t)-q_{N+1}(t)|})dt\notag\\
&\geq \dfrac{3}{2}\times (4\pi^2)(N+3)^{\frac{2}{3}}r(\dfrac{1}{r})^{\frac{1}{3}}.  \tag{3.39}
\end{align}

From $(3.38), (3.5)-(3.9)$, we can  obtain

$q_{N+2},q_1,$  collide at $t=\dfrac{i}{r}+\dfrac{2}{3}(mod\,1)$, \quad $q_{N+3},q_1$ collide at $t=\dfrac{i}{r}+\dfrac{1}{3}(mod\,1)        ,\,\,i=0,\ldots,r-1,$

$q_{N+1}, q_{2}$ collide at $\dfrac{i}{r}+\dfrac{N-1}{N}(mod\,1)$,\quad $q_{N+2},q_{2}$ collide at $ \dfrac{i}{r}+\dfrac{N-1}{N}+\dfrac{2}{3}(mod\,1)$, \quad $q_{N+3},q_{2}$ collide at $ \dfrac{i}{r}+\dfrac{N-1}{N}+\dfrac{1}{3}(mod\,1),\,i=0,\ldots,r-1,$

\vdots

$q_{N+1}, q_{N-1}$ collide at $\dfrac{i}{r}+\dfrac{2}{N}(mod\,1)$,\quad $q_{N+2},q_{N-1} $ collide at $ \dfrac{i}{r}+\dfrac{2}{N}+\dfrac{2}{3}(mod\,1)$, \quad $q_{N+3},q_{N-1}$ collide at $ \dfrac{i}{r}+\dfrac{2}{N}+\dfrac{1}{3}(mod\,1),\,i=0,\ldots,r-1,$

$q_{N+1}, q_{N}$ collide at $\dfrac{i}{r}+\dfrac{1}{N}(mod\,1)$,\quad $q_{N+2},q_{N} $ collide at $ \dfrac{i}{r}+\dfrac{1}{N}+\dfrac{2}{3}(mod\,1)$, \quad $q_{N+3},q_{N}$ collide at $ \dfrac{i}{r}+\dfrac{1}{N}+\dfrac{1}{3}(mod\,1) ,\,i=0,\ldots,r-1.$

Then by Lemma $2.5$,  Lemma $3.3$, Remark $3.1$, we have  $\forall\, 0\leq i\leq r-1, 1\leq j\leq 3,$
\begin{align}
\int_0^1&(\dfrac{1}{2}|\dot{q}_i(t)-\dot{q}_{N+j}(t)|^2+\dfrac{N+3}{|q_i(t)-q_{N+j}(t)|})dt\notag\\
&\geq \dfrac{3}{2}\times (4\pi^2)(N+3)^{\frac{2}{3}}r(\dfrac{1}{r})^{\frac{1}{3}}.  \tag{3.40}
\end{align}
So we  get
\begin{align}
f(q)&=\frac{1}{N+3}\sum_{1\leq i<j\leq N+3}\int_0^1(\frac{1}{2}\,|\dot{q}_i(t)-\dot{q}_j(t)|^2+\frac{N+3}{|q_i(t)-q_j(t)|})dt \notag
\end{align}
\begin{align}
&=\dfrac{1}{N+3}(\,\,\sum_{\stackrel{1\leq i\leq N}{1\leq j\leq 3}}\int_0^1(\dfrac{1}{2}|\dot{q}_i(t)-\dot{q}_{N+j}(t)|^2
+\dfrac{N+3}{|q_i(t)-q_{N+j}(t)|})dt+\notag\\
&\qquad \qquad \sum_{1\leq i<j\leq N} \int_0^1(\frac{1}{2}\,|\dot{q}_i(t)-\dot{q}_j(t)|^2+\frac{N+3}{|q_i(t)-q_j(t)|} )dt  +\notag\\
&\qquad \qquad \sum_{N+1\leq i<j\leq N+3}\int_0^1(\frac{1}{2}\,|\dot{q}_i(t)-\dot{q}_j(t)|^2+\frac{N+3}{|q_i(t)-q_j(t)|} )dt  \,\,)\notag\\
&\geq\dfrac{3}{2}\times(\frac{4\pi^2}{N+3})^{\frac{1}{3}} [\,3N\times r(\dfrac{1}{r})^{\frac{1}{3}}+3\times (\dfrac{1}{3})^{\frac{1}{3}}C_N^2 +3N(\dfrac{1}{N})^{\frac{1}{3}}\,]\notag\\
&\stackrel{\triangle}{=}C.\tag{3.41}
\end{align}
Case $5$:  $q_{N+1}, q_{N+2}$ collide at $t=0$.

Then by $(3.5),(3.9)$, we deduce

$q_{N+1},q_{N+2}$ collide at
\begin{align}
t=\dfrac{i}{r}+\dfrac{j}{N}(mod\,1),\,i=0,\ldots r-1,\,j=0,\ldots,N-1.  \tag{3.42}
\end{align}

From Remark $3.1$,  and $(3.42)$, we can deduce $q_{N+1}, q_{N+2}$ collide at
$$
t_i=\dfrac{i}{Nr},\quad  i=0,\ldots,Nr-1.  \eqno{(3.43)}
$$
Then we have
\begin{align}
&\int_0^1(\dfrac{1}{2}|\dot{q}_{N+1}(t)-\dot{q}_{N+2}(t)|^2+\dfrac{N+3}{|q_{N+1}(t)-q_{N+2}(t)|})dt\notag\\
&=\sum_{i=0}^{Nr-1}\int_{t_i}^{t_{i+1}}(\dfrac{1}{2}|\dot{q}_{N+1}(t)-\dot{q}_{N+2}(t)|^2+
\dfrac{N+3}{|q_{N+1}(t)-q_{N+2}(t)|})dt\notag\\
&\geq \dfrac{3}{2}\times (4\pi^2)(N+3)^{\frac{2}{3}}Nr(\dfrac{1}{Nr})^{\frac{1}{3}}.  \tag{3.44}
\end{align}
By$(3.7)$, we deduce $q_{N+2},q_{N+3}$, collide at

\begin{align}
t=\dfrac{i}{Nr}+\dfrac{2}{3},\quad i=0,\ldots,Nr-1, \notag \tag{3.45}
\end{align}

$q_{N+3},q_{N+1}$ collide at
\begin{align}
t=\dfrac{i}{Nr}+\dfrac{1}{3} , \quad i=0,\ldots,Nr-1. \notag \tag{3.46}
\end{align}
Then by Lemma $2.5$, Remark $3.1$, $(3.45)$, and $(3.46)$, we have
\begin{align}
&\int_0^1(\dfrac{1}{2}|\dot{q}_{N+2}(t)-\dot{q}_{N+3}(t)|^2+\dfrac{N+3}{|q_{N+2}(t)-q_{N+3}(t)|})dt\notag\\
&\geq \dfrac{3}{2}\times (4\pi^2)(N+3)^{\frac{2}{3}}Nr(\dfrac{1}{Nr})^{\frac{1}{3}}  \tag{3.47}
\end{align}
\begin{align}
&\int_0^1(\dfrac{1}{2}|\dot{q}_{N+3}(t)-\dot{q}_{N+1}(t)|^2+\dfrac{N+3}{|q_{N+3}(t)-q_{N+1}(t)|})dt\notag\\
&\geq \dfrac{3}{2}\times (4\pi^2)(N+3)^{\frac{2}{3}}Nr(\dfrac{1}{Nr})^{\frac{1}{3}}.  \tag{3.48}
\end{align}
So, we obtain
\begin{align}
f(q)&=\frac{1}{N+3}\sum_{1\leq i<j\leq N+3}\int_0^1(\frac{1}{2}\,|\dot{q}_i(t)-\dot{q}_j(t)|^2+\frac{N+3}{|q_i(t)-q_j(t)|})dt \notag\\
&=\dfrac{1}{N+3}( \sum_{N+1\leq i<j\leq N+3}\int_0^1(\frac{1}{2}\,|\dot{q}_i(t)-\dot{q}_j(t)|^2+\frac{N+3}{|q_i(t)-q_j(t)|} )dt  +\notag\\ &\qquad \qquad \qquad\sum_{\stackrel{1\leq i\leq N}{1\leq j\leq 3}}\int_0^1(\dfrac{1}{2}|\dot{q}_i(t)-\dot{q}_{N+j}(t)|^2
 \dfrac{N+3}{|q_i(t)-q_{N+j}(t)|})dt+\notag\\
& \qquad \qquad \qquad\sum_{1\leq i<j\leq N} \int_0^1(\frac{1}{2}\,|\dot{q}_i(t)-\dot{q}_j(t)|^2+\frac{N+3}{|q_i(t)-q_j(t)|} )dt  )\notag\\
&\geq\dfrac{3}{2}\times(\frac{4\pi^2}{N+3})^{\frac{1}{3}} [\, 3\times Nr(\dfrac{1}{Nr})^{\frac{1}{3}}+3\times(\dfrac{1}{3})^{\frac{1}{3}}C_N^2 +3N\,]\notag\\
&\stackrel{\triangle}{=} D.\tag{3.49}
\end{align}

When $N$ is odd, let $\tilde{A}=\inf{\{A,\, C,\,D\}}$, then on the collision set,
 the action functional $f\geq \tilde{A}$.

When $N$ is even, let $\tilde{B}=\inf{\{A,\,B,\, C,\,D\}}$, then
 on the collision set, the action functional $f\geq \tilde{B}$.

(1)Take $N=4, d=3,r=7,k_1=3,k_2=-4$.

We choose the following function as the test function:

Let $a>0,\,\,b>0$, and
\begin{align*}
&q_{i}=a(\cos{(6\pi t+\dfrac{2\pi (i-1)}{4})},\sin{(6\pi t+\dfrac{2\pi (i-1)}{4})}\,),\,\,i=1,\ldots,4,\\
&q_{j}=b(\cos{(-8\pi t+\dfrac{2\pi (j-5)}{3})},\sin{(-8\pi t+\dfrac{2\pi (j-5)}{3})}),\,\,j=5,6,7.
\end{align*}

We choose $a=0.2300,\,\,b=0.0880$,then
\begin{align*}
A\approx 144.6215,\,B\approx138.9586,\,&C\approx 170.7479,\,\,D\approx139.2196, \,\,\tilde{B}=138.9586,\\
&f(q)\approx135.5123< \tilde {B}.
\end{align*}
This proves that the minimizer of $f(q)$ on the closure $\bar{\Lambda}_2$ is a non-collision solution of the seven-body problem.

(2)Take $N=5, d=3,r=8,k_1=3,k_2=-5$.

We choose the following function as the test function:

Let $a>0,\,\,b>0$, and
\begin{align*}
&q_{i}=a(\cos{(6\pi t+\dfrac{2\pi (i-1)}{5})},\,\sin{(6\pi t+\dfrac{2\pi (i-1)}{5})}\,),\,\,i=1,\ldots,5,\\
&q_{j}=b(\cos{(-10\pi t+\dfrac{2\pi (j-6)}{3})},\,\sin{(-10\pi t+\dfrac{2\pi (j-6)}{3})}),\,\,j=6,7,8.
\end{align*}

 We choose $a=0.2450,\,\,b=0.0760$, then

 \begin{align*}
A\approx193.5057,\,\,&C\approx 181.0305,\,\,D\approx228.7437 , \,\,\tilde{A}=181.0305,\\
&f(q)\approx175.2312< \tilde {A}.
\end{align*}

This proves that the minimizer of $f(q)$ on the closure $\bar{\Lambda}_2$ is a non-collision solution of the eight-body problem.

(3)Take $N=7, d=3,r=10,k_1=3,k_2=-7$.

We  choose the following function as the test function:

Let $a>0,\,b>0$, and

\begin{align*}
&q_{i}=a(\cos{(6\pi t+\dfrac{2\pi (i-1)}{7})},\sin{(6\pi t+\dfrac{2\pi (i-1)}{7})}\,),\,\,i=1,\ldots,7,\\
&q_{j}=b(\cos{(-14\pi t+\dfrac{2\pi (j-8)}{3})},\sin{(-14\pi t+\dfrac{2\pi (j-8)}{3})}),\,\,j=8,9,10.
\end{align*}
We choose  $a=0.2500,\,\,b=0.0640$, then
 \begin{align*}
A\approx305.0645 ,\,\,
&C\approx274.1354,\,\,D\approx 360.6557, \,\,\tilde{A}=274.1354,\\
&f(q)\approx266.6297< \tilde {A}.
\end{align*}

This proves that the minimizer of $f(q)$ on the closure $\bar{\Lambda}_2$ is a non-collision solution of the ten-body problem.

\end{document}